\newfont{\bbb}{msbm10 scaled 500}
\newfont{\bb}{msbm10 scaled 1100}
\newcommand{\RR}{\mbox{\bb R}}
\newcommand{\EE}{\mbox{\bb E}}
\newcommand{\Prob}{\textrm{Pr}}
\newcommand{\Bc}{{\cal B}}
\newcommand{\Cc}{{\cal C}}
\newcommand{\Ec}{{\cal E}}
\newcommand{\Nc}{{\cal N}}
\newcommand{\Pc}{{\cal P}}
\newcommand{\Rc}{{\cal R}}
\newcommand{\Sc}{{\cal S}}
\newcommand{\Tc}{{\cal T}}
\newcommand{\Uc}{{\cal U}}
\newcommand{\Xc}{{\cal X}}
\newcommand{\Yc}{{\cal Y}}
\newcommand{\Zc}{{\cal Z}}
\newtheorem{theorem}{Theorem}
\newtheorem{proposition}[theorem]{Proposition}
\newtheorem{corollary}[theorem]{Corollary}
\newtheorem{lemma}[theorem]{Lemma}
\definecolor{OXO-emph}{rgb}{0.61,0.00,0.00}
\newcommand{\qed}{\hfill $\blacksquare$}
\title{State Amplification Subject To Masking Constraints}
\author{
O.~Ozan~Koyluoglu, Rajiv Soundararajan, and Sriram Vishwanath
\thanks{
O. Ozan Koyluoglu is with the Department of Electrical and Computer Engineering, The University of Arizona, Tucson, AZ. Email: ozan@email.arizona.edu.
Rajiv Soundararajan is with Qualcomm Research India, Bangalore, India. Email: rajivs@utexas.edu.
Sriram Vishwanath is with the Department of Electrical and Computer Engineering, The University of Texas, Austin, TX. Email: sriram@austin.utexas.edu.}
\thanks{The material in this paper was presented in part at the
Forty-Ninth Annual Allerton Conference on Communication, Control, and Computing,
Monticello, IL in September 2011.}
}
\begin{document}
\maketitle


\begin{abstract}
This paper considers a state dependent broadcast channel with one transmitter, Alice, and two receivers, Bob and Eve. The problem is to effectively convey (``amplify") the channel state sequence to Bob while ``masking" it from Eve. The extent to which the state sequence cannot be masked from Eve is referred to as leakage. This can be viewed as a secrecy problem, where we desire that the channel state itself be minimally leaked to Eve while being communicated to Bob. The paper is aimed at characterizing the trade-off region between amplification and leakage rates for such a system. An achievable coding scheme is presented, wherein the transmitter transmits a partial state information over the channel to facilitate the amplification process. For the case when Bob observes a stronger signal than Eve, the achievable coding scheme is enhanced with secure refinement. Outer bounds on the trade-off region are also derived, and used in characterizing some special case results. In particular, the optimal amplification-leakage rate difference, called as differential amplification capacity, is characterized for the reversely degraded discrete memoryless channel, the degraded binary, and the degraded Gaussian channels. In addition, for the degraded Gaussian model, the extremal corner points of the trade-off region are characterized, and the gap between the outer bound and achievable rate-regions is shown to be less than half a bit for a wide set of channel parameters.
\end{abstract}


\section{Introduction}


\subsection{Problem Statement}

In this paper, we consider a state dependent broadcast channel model with two users, and investigate the question of to what extent the state of the channel can be amplified at the receiver (Bob) and masked from the other receiver (referred to as Eve). The entire channel state sequence is presumed to be known non-causally to the transmitter (Alice). The only manner to affect the state information at Bob and Eve is by the encoding scheme used at the transmitter. For such a system, we aim to characterize the trade-off between the ``amplification"-rate $R_a$ (at which the legitimate pair can operate) and the ``leakage"-rate $R_l$ (to the eavesdropper).

Formally, consider a discrete memoryless channel given by $p(y,z|x,s)$, where $x\in\Xc$ is the channel input, $s\in\Sc$ is the channel state, and $(y,z)\in(\Yc\times\Zc)$ is the channel output. Here, $y$ corresponds to the received channel output at the legitimate receiver (Bob) and $z$ is the output at the eavesdropper (Eve). The channel is memoryless in the sense that
\begin{equation*}
p(Y^n=y^n,Z^n=z^n| X^n=x^n, S^n=s^n)=
\prod\limits_{i=1}^{n}p(y_i,z_i|x_i,s_i),
\end{equation*}
and the state sequence $S^n$ is independent and identically distributed (i.i.d.) according to a probability distribution indicated by $p(s)$. It is assumed that the channel state sequence is non-causally known at the transmitter. The system model is given in Fig~\ref{fig:Model}.

\begin{figure}[t]
    \centering
    \includegraphics[width=0.7\columnwidth]{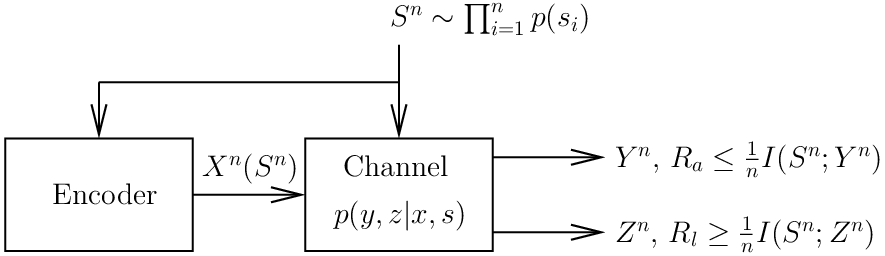}
    \caption{The system model for amplification subject to
    masking problem.}
    \label{fig:Model}
\end{figure}

The task of the encoder is to ``amplify" the state sequence at Bob  (channel output $Y^n$) and to ``mask" the state sequence from Eve ($Z^n$). Formally, the encoder $\textrm{Enc}(S^n,n)$ is a mapping of channel state $S^n$ to the channel input $X^n$, i.e., $\textrm{Enc}: \Sc^n \to \Xc^n$, which can be characterized by a conditional probability distribution $p(x^n|s^n)$. $(R_a,R_l)$ is said to be achievable, if for any given $\epsilon>0$, there exist a sequence of encoders $\{\textrm{Enc}(S^n,n)\}$ such that
\begin{equation}\label{eq:Ra}
\frac{1}{n} I(S^n;Y^n) \geq R_a - \epsilon
\end{equation}
\begin{equation}\label{eq:Rl}
\frac{1}{n} I(S^n;Z^n) \leq R_l + \epsilon
\end{equation}
for sufficiently large $n$, where the mutual information terms are with respect to the joint distribution $p(s^n,x^n,y^n,z^n)=p(x^n|s^n)\prod\limits_{i=1}^n p(s_i)p(y_i,z_i|s_i,x_i)$. The problem is to characterize all achievable ($R_a$,$R_l$) pairs, which we denote by the trade-off region $\Cc$.

The performance of the encoder is also quantified by measuring the difference between the achievable amplification and leakage rates. The \emph{differential amplification rate} $R_d$ is said to be achievable if $R_d=R_a-R_l$ for some $(R_a,R_l)\in\Cc$. The maximum value of the differential amplification rate is called as the \emph{differential amplification capacity}, denoted by $C_d$, where 
\begin{equation*}
C_d=\sup\limits_{(R_a,R_l)\in\Cc} R_a-R_l.
\end{equation*}
Note that this quantity is a property of a given trade-off region $\Cc$ and can play a role for some applications, as this difference measures the knowledge difference between the two receivers regarding the state of the channel. (This quantity, $C_d$, is discussed further in the later parts of the sequel.) In general, we are not only interested in the knowledge difference, but also in the entire rate trade-off region, which constitutes the main focus of this paper.

A cost constraint may also be imposed on the channel input with
\begin{equation}\label{eq:Cost}
\frac{1}{n}\sum\limits_{i=1}^n \EE \{c(X_i)\}\leq C,
\end{equation}
where $c:\Xc\to\RR^+$ defines the cost per input letter and the expectation is over the distribution of the channel input. In this scenario, we say $(R_a,R_l)$ is achievable under the cost function $c(.)$ and at expected cost $C$, if there exists a sequence of encoders satisfying \eqref{eq:Ra}, \eqref{eq:Rl}, and \eqref{eq:Cost} in the limit of large $n$. (We use this constraint for the Gaussian channel, where the cost is the average transmitted power.)

Finally, we note that the equivocation rate (denoted by $\Delta_l$) can be used in this formulation as well. In particular, $\Delta_l$ is said to be achievable, if there exists a sequence of encoders satisfying
\begin{equation*}
\Delta_l \leq \frac{1}{n}H(S^n|Z^n) + \epsilon,
\end{equation*}
for sufficiently large $n$. Accordingly, the achievable $(R_a,\Delta_l)$ pairs can be defined. Hence, the problem can be re-formulated in terms of equivocation rate, where we seek to characterize all achievable $(R_a,\Delta_l)$ pairs in the limit of large $n$. Since both the equivocation and leakage rate notions characterize the same trade-off, both notions can be used interchangeably.


\subsection{Related works and applications}

The problem of communication over state dependent channels is studied by Gel'fand and Pinsker \cite{Gel'fand:Coding80}, where a message has to be reliably transmitted over the channel with non-causal state knowledge at the transmitter. The Gaussian version of the problem is solved in 
\cite{Costa:Writing83} through the famous dirty paper coding scheme. While the wiretap channel is introduced and solved in \cite{Wyner:Wire-tap75}, these results are extended to a broadcast setting in \cite{Csisz'ar:Broadcast78}. The problem of sending secure messages over state dependent wiretap channels is studied in \cite{Chen:Wiretap08,Mitrpant:Achievable06}.

On the other hand, the problems of state amplification and state masking are individually solved
in~\cite{Sutivong:Channel05,Kim:State08,Merhav:Information07} for point-to-point channels. Both \cite{Sutivong:Channel05,Kim:State08} and \cite{Merhav:Information07} consider the problem of reliable transmission of messages in addition to state amplification and state masking, respectively. In this paper, we consider the problem of amplifying the state to a desired receiver while trying to minimize the leakage to (or mask the state from) the eavesdropper. We note that, if we set $R_a=0$ in our problem definition, it reduces to the state masking problem as studied in~\cite{Merhav:Information07}. In other words,
\begin{eqnarray*}
R_a&=&0\\
R_l&=&\min\limits_{p(x|s) \textrm{ s.t. } \EE \{c(X)\}\leq C} I(S;Z)
\end{eqnarray*}
can be shown to be achievable~\cite{Merhav:Information07}.
Also, when $R_l\geq H(S)$, the problem reduces to
a state amplification problem ~\cite{Kim:State08}, and one
can achieve the following rate pair.
\begin{eqnarray*}
R_a&=&\min\{H(S),\max\limits_{p(x|s)} I(X,S;Y)\}\\
R_l&\geq&H(S)
\end{eqnarray*}
These represent two extremes of the trade-off region between the amplification and masking rates.
The main focus of this paper is to characterize the entire trade-off region of amplification and leakage rates. In the following, we list some applications of the proposed model.


\subsubsection{Broadcast channels}

Communication over state dependent channels have a wide set of applications in broadcast channels, especially for the MIMO systems~\cite{Caire:achievable03,Caire:Information06,Jafar:PhantomNet04}. In such multi-user channels, an information-carrying signal can be modeled as a state sequence for another. As the transmitter knows the information-carrying signal for the first user, it can be treated as a known state for the transmission to the second user. In addition, the second signal (intended for the second user) can be considered as an overlay to the first one, especially for multicasting scenarios. (Here, the codeword of the first signal may follow an i.i.d. process, or this can be an uncoded information~\cite{Merhav:Information07}.) In such multi-user settings, the security of the communication arise as an important problem due to the broadcast nature of the medium. Utilizing the framework given in this paper, the amplification and leakage of the signals can be analyzed.


\subsubsection{Cognitive radio and relay channels}

Another relevant setting where our results will be of significant interest is cognitive radio systems~\cite{MitolaIII:Cognitive99,MitolaIII:Cognitive00,Devroye:Achievable06,Goldsmith:Breaking09,Liang:Capacity09,Simeone:Cognitive09,Toher:Secrecy10,Zhang:Secure11}.
For example, in an overlay conginitive radio scenario~\cite{Goldsmith:Breaking09}, the cognitive encoder (Alice) can facilitate the secure communication of the primary signal (the state sequence of the channel) by amplifying the signal at the primary receiver (Bob) while masking it from the eavesdropper (Eve). In such an application, the codebook information of the primary signal may be absent at the cognitive radio, and the formulation we have in this paper would be relevant. More generally, this setting belongs to user cooperation or relaying architectures that can increase the security of the communication systems \cite{Lai:relay-eavesdropper08,Tandon:Secure09,Koyluoglu:Cooperative11,Yuksel:Secure11}. In an another cognitive radio setup, two cognitive users (Alice and Bob) can utilize the primary signal (interfering sequence $S^n$) to share a secret key between each other in the presence of an eavesdropper. (Applications to key sharing from channel states are detailed in the following.)


\subsubsection{Secret key generation from channel states}

In recent years, a bridge between cryptography and information theoretic security has emerged in the form of channel state-dependent key generation \cite{Azimi-Sadjadi:Robust07,Mathur:Radio-telepaty08,Jana:effectiveness09,Mathur2010}. In this line of work, a state dependent channel (such as the wireless channel) is considered, and a function of this state is intended to be a ``shared secret'' between the legitimate transmitter (Alice) and legitimate receiver (Bob) while aiming to keep the eavesdropper (Eve) as much in the dark as possible. In the best case, the state(s) seen by Bob and Eve will be completely different (independent). However, the states of the channels are usually dependent. Moreover, for some models (e.g., the dirty paper coding setup~\cite{Costa:Writing83}), there is a single channel state defining the channel for both Bob and Eve. (For example, $Y=X+S+N_{\textrm{Bob}}$ at Bob and $Z=X+S+N_{\textrm{Eve}}$ at Eve as considered in \cite{Khisti:Secret11}. See also the discussion above for cognitive radio channels.) For such scenarios, as long as there is a non-trivial difference between the amplification and leakage rates, the state knowledge can be used to develop shared keys and enable cryptographic algorithms. For instance, utilizing the coding schemes proposed here, Bob can have $nR_a$ bits of information regarding the channel state, where at most $nR_l$ number of these bits are leaked to Eve. Then, privacy amplification~\cite{Bennett:Generalized95} can be utilized to distill secret keys. That is, the methods provided in this paper can be utilized for the ``advantage distillation'' phase of a key agreement protocol~\cite{Bennett:Generalized95}. In this form, the problem at hand is very much related to the generation of secrecy using sources and channels. Here, not only the source in the model is the channel state (different from the models studied in~\cite{Khisti:Secret-key08,Prabhakaran:Secrecy08}), but also it is non-trivially combined with the encoded signals of Alice (via the state dependent channel) to produce the observations at Bob and Eve. Hence, our formulation can be closely associated with the problem of secret key agreement over state dependent channels~\cite{Khisti:Secret10, Khisti:Secret11}. In such problems, one is interested in the design of coding strategies that allow an agreement of secure bits between the legitimate users utilizing the state dependent channel. For example, sending secure bits over the channel will increase the secret key rate~\cite{Khisti:Secret11, Chen:Wiretap08, Mitrpant:Achievable06}. On the other hand, the problem studied in this work, when specialized to the differential amplification measure, is an analysis of the channel state knowledge difference. This analysis is very much related to the question of how many secret bits can be extracted from a given channel state (\cite{Azimi-Sadjadi:Robust07,Mathur:Radio-telepaty08,Jana:effectiveness09,Mathur2010}). We intend to provide information-theoretic guarantees (achievable rates and upper bounds) to the extent to which such a shared secret can be realized for our system model \footnote{We note that differential amplification capacity is related to the secret key rate that can be achieved utilizing the corresponding channel state. For example, consider $Y=S_1$, $Z=S_2$, where $S_1$ and $S_2$ are uniform bits. Here, $C_d=0$ when the state is defined as $S=[S_1,S_2]$, and $C_d=1$ when the state is taken as $S_1$, implying that $1$ bit of secret key rate can be supported.}. Once obtained, this shared secret can now be employed to seed numerous symmetric key cryptosystems~\cite{Delfs:Introduction07, Goldreich:Foundations04}. 


\subsubsection{Watermarking and channel estimation}

Consider that a host image $S$ is utilized in a watermarking scenario, where the encoder modifies the image in order to amplify it at the receiver $Y$ and mask it from receiver $Z$. This model is similar to channel estimation scenarios in wireless communications. For example, a pilot signal can be constructed at the transmitter such that it not only facilitates the fading gain estimation at the receiver but also hides the channel state from the eavesdroppers as much as possible. For instance, while communicating to a base station for channel state estimation, a mobile user may want to hide this information from the external nodes in order to establish privacy of her location.


\subsection{Summary of results and organization}

In this paper, we aim at developing an understanding of the trade-off between amplification and masking rates in a state dependent broadcast channel through achievable regions and outer bounds, and characterizing special cases when they match. The main results of this paper can be summarized as follows.

\subsubsection{Achievable Regions} The main achievability argument of the paper utilizes the transmission of a state dependent message over the state dependent channel. To facilitate this, we construct a (Gel'fand-Pinsker) codebook, wherein the corresponding codeword (denoted by $U^n$) carry this refinement information in such a way that reliable communication can be achieved over the state-dependent channel. Subsequently, we utilize this refinement information within a Wyner-Ziv coding scheme to derive expressions for the achievable amplification rates. In particular, we show that, even though the side information is not generated in an i.i.d. fashion, Wyner-Ziv approach can be used to facilitate amplification process. The leakage rates are then determined by deriving single-letter bounds on $\frac{1}{n} I(S^n;Z^n)$, and the achievable regions are established over the input probability distributions $p(u,x|s)$. The bounds show that the rate of the refinement message not only increases the amplification rate $R_a$, but also increases the leakage rate $R_l$, thereby establishing a trade-off for implementation.

\subsubsection{Secure Refinement} We show that it is possible to extend the proposed region by transmitting \emph{secure refinement} information when Bob observes a ``stronger" channel than Eve. In precise terms, this corresponds to instances of $p(u,x|s)$ satisfying $I(U;Y)\geq I(U;Z)$. Note that a channel is said to have a less noisy structure if $I(U;Y)\geq I(U;Z)$ for all input probability distributions~\cite{Korner:source75, Korner:Comparison77}. We find that the utilization of the notion of secure refinement approach is critical to such channels, and we show that the leakage due to transmission of the message can be minimized by securing the message. In the process of establishing these results, we also develop an alternate proof of secure message transmission over state dependent wiretap channels.

\subsubsection{Special Classes of Channels} Our outer bound arguments
are based on upper bounding $\frac{1}{n} I(S^n;Y^n)$
and lower bounding $\frac{1}{n} I(S^n;Z^n)$. The
achievable schemes and outer bounds presented are used to establish optimality results
for a class of channels. In particular, we show that the proposed scheme
achieves the optimal differential amplification capacity (i.e., the maximum value of
$R_a-R_l$ over the set of achievable $(R_a,R_l)$
pairs) for the reversely degraded discrete memoryless channel (DMC), the degraded binary channel, and the degraded Gaussian channel.

\subsubsection{Gaussian Channels} We characterize the corner points
of the region for the degraded Gaussian channel. In this scenario, we further
bound the gap between achievable and converse regions, and
show the following: Let us denote the message capacity
of Bob's channel as $C_b=\frac{1}{2}\log(1+\textrm{SNR}_b)$ and that of Eve's channel as $C_e=\frac{1}{2}\log(1+\textrm{SNR}_e)$, where 
$\textrm{SNR}_b$ ($\textrm{SNR}_e$) is the signal-to-noise ratio of
Bob (respectively, Eve).
Then, for any given leakage rate $R_l$, we show that the gap between the upper and lower bounds
on the amplification rate $R_a$
is bounded by $C_b$. Similarly, for any given amplification rate $R_a$, we show that the achievable leakage rate is within $C_e$ of the lower bound on $R_l$.
In particular, the corresponding gaps are within half a bit when
$\textrm{SNR}_b\leq 1$ and $\textrm{SNR}_e\leq 1$, respectively.

The rest of the paper is organized as follows. Section~\ref{sec:InnerBound} details the proposed coding schemes and presents the corresponding achievable regions. Section~\ref{sec:OuterBound} provides the outer bounds to the trade-off region. Section~\ref{sec:DMC} includes optimality discussions and numerical results for special classes of DMCs including the reversely degraded channel, modulo additive binary channel model, and the memory with defective cells model. The Gaussian channel model is considered in Section~\ref{sec:Gauss} along with corresponding optimality results. Finally, we conclude the paper in Section~\ref{sec:Conclusion}. Some of the proofs are collected in appendices to improve the flow of the paper.


\section{Achievable Regions}
\label{sec:InnerBound}

We consider transmitting \emph{state dependent information} over the state dependent channel. This refinement information can be utilized at Bob in resolving some ambiguity regarding the channel state. We divide the discussion into two parts, each providing a different scheme for the transmission of this refinement information.


\subsection{State enhanced messaging: Refinement}

In order to facilitate the transmission of information regarding the state sequence, we consider an encoder that transmits a \emph{state dependent message} over the state dependent channel. Here, by employing the Gel'fand-Pinsker coding~\cite{Gel'fand:Coding80}, a rate of $I(U;Y)-I(U;S)$ can be reliably communicated over the state dependent channel. Utilizing this communication rate, Alice can provide a \emph{refinement} information to Bob. In particular, we consider a Wyner-Ziv coding~\cite{Wyner:rate76} approach to provide such a refinement information (as discussed in~\cite{Kim:State08}). Here, the side information at the receiver is $(U^n,Y^n)$ - consisting of Gel'fand-Pinsker codeword $U^n$ and the observed sequence from the channel $Y^n$. For this side information scenario, we utilize the Gel'fand-Pinsker information rate as the bin index of a covering codeword $V^n$. We note that, in the original setup of source coding with side information (aka Wyner-Ziv model), the side information is generated i.i.d. with the source sequence. For the scenario considered here, the side information is not in this form. In the following, we show that this issue can be resolved, and the Wyner-Ziv coding scheme can be utilized to increase the amplification rate. On the other hand, this transmission scheme may leak some state information to Eve. Accordingly, we obtain appropriate bounds on the leakage rate that depends on the Gel'fand-Pisker coding rate used for the refinement. The corresponding achievable region is given by the following.

\begin{theorem}\label{thm:Rin1}[Refinement]
Let $\Rc^{1}$ be the closure of the union of all $\left( R_a, R_l \right)$ pairs satisfying
\begin{eqnarray}
R_a &\leq& I(S;Y,U) + R_r\nonumber\\
R_l &\geq& \min \big\{I(U,S;Z), I(S;Z,U)+R_r \big\}\nonumber\\
R_r &\leq& \min\{I(U;Y)-I(U;S),H(S|Y,U)\},\nonumber
\end{eqnarray}
over all distributions $p(u,x|s)$ satisfying $I(U;Y)\geq I(U;S)$.
Then, $\Rc^{1}\subseteq\Cc$.
\end{theorem}


\subsubsection{Proof of Theorem~\ref{thm:Rin1}}
We provide main steps of the coding argument here and relegate some of the details to appendices.

\textit{Codebook Generation:} Fix a $p(u|s)$ and a $p(v|s)$. (The requirement on these distributions will be specified later.) Randomly and independently generate $2^{nR_v}$ sequences $V^n(l)$, $l \in [1,2^{nR_v}]$, each according to $\prod_{i=1}^n p_V(v_i)$. Partition indices $l$ into equal-size subsets referred to as bins $\Bc(b)=[(b-1)2^{n(R_v-R_r)}+1: \: b2^{n(R_v-R_r)}]$, $b\in[1:2^{nR_r}]$. (This is the codebook used for Wyner-Ziv coding~\cite{Wyner:rate76,ElGamal:Network11}.) For each $m\in[1:2^{nR_m}]$, generate $2^{n(R_u-R_m)}$ number of $U^n$ sequences randomly and independently according to $\prod_{i=1}^n p_U(u_i)$. Index these sequences as $U^n(m,k)$ with $k\in[1:2^{n(R_u-R_m)}]$. (This is the codebook used for Gel'fand-Pinsker coding~\cite{Gel'fand:Coding80,ElGamal:Network11}.)

\textit{Encoding:} Here, the encoder sets $R_m=R_r$ and $m=b$ and transmits message $b$ of rate $R_r$ with Gel'fand-Pinsker coding in order to perform refinement via Wyner-Ziv coding. Given $s^n$, the encoder finds an index $l$ such that $(s^n,v^n(l))\in\Tc_{\epsilon'}^{(n)}$. If there exists more than one index, it selects one uniformly random among these. If there exists no such index, it selects one uniformly random from $[1,2^{nR_v}]$. From $l$, encoder determines the bin index $b$ in the $V^n$ codebook such that $l\in\Bc(b)$. Then, it sets $m=b$ and finds an index $k$ such that $(s^n,u^{(n)}(m,k))\in\Tc_{\epsilon'}^{(n)}$. If no such (covering) index $k$ exists or if there are more than one, the encoder picks one uniformly at random. The encoder then transmits $x^n$ that is generated i.i.d. according to $\prod_{i=1}^n p_{X|U,S}(x_i|u_i(m,k),s_i)$.

\textit{Amplification rate analysis:} In the following, we show that the decoder can obtain $U^n$ codeword using Gel'fand-Pinsker decoding and then $V^n$ codeword by utilizing the side information $(U^n,Y^n)$. (This is the discussion provided in~\cite{Kim:State08} for a state amplification mechanism, which we detail here.) We then provide a derivation of the state amplification rate utilizing this decoding mechanism.

Let $\epsilon>\epsilon''>\epsilon'$. Upon receiving $y^n$, decoder declares that $\hat{m}\in[1:2^{nR_r}]$ and $\hat{k}\in[1:2^{n(R_u-R_r)}]$ are chosen at the encoder, if these are the unique indices such that $(u^n(\hat{m},\hat{k}),y^n)\in\Tc_\epsilon^{(n)}$, otherwise it declares an error. (We remark that compared to the Gel'fand-Pinsker setup~\cite{Gel'fand:Coding80,ElGamal:Network11}, where only the message $m$ has to be uniquely decoded, we require decoder to obtain the codeword $u^n$ chosen at the encoder. This will be utilized in decoding error analysis.) Then, the decoder finds the unique index $\hat{l}\in \Bc(\hat{m})$ such that $(v^n(\hat{l}),u^n(\hat{m},\hat{k}),y^n)\in\Tc_\epsilon^{(n)}$. Otherwise, it declares an error.

Consider the error event $\Ec=\{\hat{M}\neq M, \hat{K}\neq K, \hat{L}\neq L\}$, where $(K,L,M)$ are the indices chosen at the transmitter, and $(\hat{K},\hat{L},\hat{M})$ are the indices decoded at the receiver. Decoder makes an error only if one or more of the following events occur.
\begin{align*}
&\Ec_1 = \{(U^n(M,k),S^n)\notin \Tc_{\epsilon'}^{(n)} \: \textrm{for all } k\}\\
&\Ec_2 = \{(U^n(M,K),Y^n)\notin \Tc_{\epsilon}^{(n)} \}\\
&\Ec_3 = \{(U^n(m,k),Y^n)\in \Tc_{\epsilon}^{(n)} \: \textrm{for some } m\neq M\}\\
&\Ec_4 = \{(U^n(M,k),Y^n)\in \Tc_{\epsilon}^{(n)} \: \textrm{for some } k\neq K\}\\
&\Ec_5 = \{(V^n(l),S^n)\notin \Tc_{\epsilon'}^{(n)} \: \textrm{for all } l\}\\
&\Ec_6 = \{(V^n(L),S^n,U^n(M,K),Y^n)\notin \Tc_{\epsilon}^{(n)} \}\\
&\Ec_7 = \{(V^n(l),U^n(M,K),Y^n)\in \Tc_{\epsilon}^{(n)} \: \textrm{for some } l\neq L, l\in \Bc(M) \}
\end{align*}
Here, $\Ec=\cup_{j=1}^7 \Ec_j$. By the union of events bound, we have $$\Pr\{\Ec\}\leq 
\Pr\{\Ec_1\}+\Pr\{\Ec_1^c \cap \Ec_2\}+\Pr\{\Ec_3\}+\Pr\{\Ec_3^c \cap \Ec_4\}+\Pr\{\Ec_5\}+\Pr\{\Ec_1^c \cap \Ec_5^c \cap \Ec_6\}+ \Pr\{\Ec_7\}.$$
We bound each term above in the following.

a) By covering lemma~\cite{ElGamal:Network11}, $\Pr\{\Ec_1\}\to 0$ as $n\to\infty$, if
\begin{equation}\label{eq:RuminusRm}
R_u-R_m > I(U;S)+\delta(\epsilon').
\end{equation}

b) $\Ec_1^c$ implies that $(U^n(M,K),S^n)\in \Tc_{\epsilon'}^{(n)}$, which implies that $(U^n(M,K),S^n,X^n)\in \Tc_{\epsilon''}^{(n)}$ due to i.i.d. generation of $x_i$ from $(s_i,u_i)$ and the conditional typicality lemma~\cite{ElGamal:Network11} for some $\epsilon''>\epsilon'$. Similarly, as $Y^n$ is generated i.i.d. from $(s_i,u_i,x_i)$ through $(s_i,x_i)$, we have $\Pr\{\Ec_1^c \cap \Ec_2\}\to 0$ as $n\to \infty$ for some $\epsilon>\epsilon''$, again due to the conditional typicality lemma~\cite{ElGamal:Network11}.

c) Each $U^n(m,k)$ for some $m\neq M$ is distributed i.i.d. $\prod\limits_{i=1}^n p_U(u_i)$ and independent of $Y^n$. By packing lemma~\cite{ElGamal:Network11}, we have $\Pr\{\Ec_3\}\to 0$ as $n\to\infty$, if 
\begin{equation}\label{eq:Ru}
R_u < I(U;Y)-\delta(\epsilon).
\end{equation}
The three error analyses above are the same arguments used for decoding the message index in Gel'fand-Pinsker coding~\cite{Gel'fand:Coding80,ElGamal:Network11}.

d) The analysis for showing that $\Pr\{\Ec_3^c\cap\Ec_4\}\to 0$ as $n\to\infty$ is detailed in Appendix~\ref{App:CoveringIndexDecoding}. We use the arguments given in~\cite{Lim:Lossy10} in order to show this. The original problem studied in~\cite{Lim:Lossy10} does not involve a state-dependent channel, but the coding scheme constructs channel inputs via $x(u,s)$, which can be viewed as a state dependent channel. (The argument we have here - the Gel'fand-Pinsker codeword decoded at the decoder is the same as the one chosen at the encoder - also appears in~\cite{Choudhuri:non12}, which states that this observation follows from the arguments in~\cite{Lim:Lossy10,Lapidoth:Sending10}.)

e) By covering lemma~\cite{ElGamal:Network11}, $\Pr\{\Ec_5\}\to 0$ as $n\to\infty$, if
\begin{equation}\label{eq:Rv}
R_v > I(S;V)+\delta(\epsilon').
\end{equation}

f) For analyzing $\Pr\{\Ec_1^c \cap \Ec_5^c \cap \Ec_6\}$, we note that one may not utilize conditional typicality lemma as done in the proof of Wyner-Ziv coding (see, e.g., \cite[Section 11.3.1]{ElGamal:Network11}). Because, the side information here, $(U^n,Y^n)$, is not generated i.i.d. through $p_{U,Y|S}(u_i,y_i|s_i)$. Therefore, one can not go from joint typicality of $(V^n(L),S^n)$ to the joint typicality of $(V^n(L),S^n,U^n(M,K),Y^n)$. Instead, we consider the following approach: $\Ec_5^c$ implies that 
\begin{align}\label{eq:f1}
(V^n(L),S^n)\in\Tc_{\epsilon'}^{(n)}, 
\end{align}
and $\Ec_1^c$ implies that $(U^n(M,K),S^n)\in\Tc_{\epsilon'}^{(n)}$. Now, consider a pair $(v^n,s^n)\in\Tc_{\epsilon'}^{(n)}$. In addition, we have $\Pr\{U^n(M,K)=u^n|S^n=s^n,V^n(L)=v^n\}=\Pr\{U^n(M,K)=u^n|S^n=s^n\}$ such that the pmf $p(u^n|s^n)$ satisfies the following two conditions: The first condition is that
\begin{align}\label{eq:f2}
\lim\limits_{n\to\infty} \Pr\{(s^n,U^n(M,K))\in\Tc_{\epsilon'}^{(n)}(S,U)\}=1, 
\end{align}
which is due to the fact that probability of the $\Ec_1^c$ event vanishes. And, the second condition is that, for every $u^n\in\Tc_{\epsilon'}^{(n)}(U|s^n)$ and for sufficiently large $n$,
\begin{align}\label{eq:f3}
2^{-n(H(U|S)+\delta(\epsilon'))} \leq p(u^n|s^n) \leq 2^{-n(H(U|S)-\delta(\epsilon'))}
\end{align}
for some $\delta(\epsilon')\to 0$ as $\epsilon'\to 0$. This second assertion, i.e., \eqref{eq:f3}, follows from \cite[Lemma 12.3]{ElGamal:Network11}. Now, from \eqref{eq:f1}, \eqref{eq:f2}, \eqref{eq:f3}, and the fact that $V\to S \to U$ forms a Markov chain, we obtain from Markov lemma~\cite{ElGamal:Network11} that, for some $\epsilon''>\epsilon'$, $\lim\limits_{n\to\infty} \Pr\{V^n(L),S^n,U^n(M,K)\in\Tc_{\epsilon''}^{(n)}|V^n(L)=v^n,S^n=s^n\}=1$. Denoting $\tilde{\Ec}_6=\{V^n(L),S^n,U^n(M,K)\notin\Tc_{\epsilon''}^{(n)}\}$, the analysis above implies that $\Pr\{\Ec_1^c \cap \Ec_5^c \cap \tilde{\Ec}_6\}\to 0$ as $n\to\infty$. (We note that, an analysis similar to the one above is given for the Berger-Tung inner bound in~\cite{ElGamal:Network11}. The difference here is that the bin index of the covering codeword $V^n$ determines the $M$ index of $U^n$. Still, given a realization of $v^n$, and hence $m$, we have covering sequences, i.e., the set $\{U^n(m,k)\}|_{k=1}^{2^{n(R_u-R_r)}}$, for $s^n$, from which the argument follows.) Then, by conditional typicality lemma~\cite{ElGamal:Network11}, for some $\epsilon>\epsilon''$, we have $(V^n(L),S^n,U^n(M,K),Y^n) \in \Tc_{\epsilon}^{(n)}$, as $(V^n(L),S^n,U^n(M,K))\in\Tc_{\epsilon''}^{(n)}$ (due to having $\Ec_1^c\cap\Ec_5^c\cap\tilde{\Ec}_6$ w.h.p.), and $Y^n|(V^n(L)=v^n,S^n=s^n,U^n(M,K)=u^n)\sim \prod\limits_{i=1}^n p_{Y|U,S}(y_i|u_i,s_i)$. This implies that $\Pr\{\Ec_1^c\cap\Ec_5^c\cap\Ec_6\}\to 0$ as $n\to\infty$.

g) Finally, $\Pr\{\Ec_7\}$ analysis follows as that for the proof of Wyner-Ziv coding (see, e.g., \cite[Section 11.3.1]{ElGamal:Network11}). In particular, considering the event $\tilde{\Ec_7}=\{(V^n(l),U^n(M,K),Y^n)\in \Tc_\epsilon^{(n)} \textrm{ for some } l\in\Bc(1), M\neq1\}$, \cite[Lemma 11.1]{ElGamal:Network11} shows that $\Pr\{\Ec_7\}\leq \Pr\{\tilde{\Ec}_7\}$. Then, as each $V^n(l)$ with $l\in\Bc(1)$ is generated by $\prod\limits_{i=1}^n p_V(v_i)$ and independent of $(U^n(M,K),Y^n)$, from packing lemma~\cite{ElGamal:Network11}, $\Pr\{\tilde{\Ec}_7\}\to 0$ as $n\to \infty$, if
\begin{equation}\label{eq:RvminusRr}
R_v-R_r < I(V;U,Y)-\delta(\epsilon).
\end{equation}
Therefore, $\Pr\{\Ec_7\}\to 0$ as $n\to\infty$, if \eqref{eq:RvminusRr} holds.

The analysis above implies that $\Pr\{\Ec\}\to 0$ as $n\to\infty$ if \eqref{eq:RuminusRm}, \eqref{eq:Ru}, \eqref{eq:Rv}, and \eqref{eq:RvminusRr} hold. Here, we set $R_m=R_r$ and $R_u-R_r=I(U;S)+\delta_1$ and $R_v=I(V;S)+\delta_2$, for some arbitrarily small $\delta_1$ and $\delta_2$. This will satisfy \eqref{eq:RuminusRm} and \eqref{eq:Rv}. Furthermore, we set $R_r=I(V;S)-I(V;U,Y)+\delta_2=I(V;S|U,Y)+\delta_2$, which is the rate required to describe $V^n$ to the decoder. This will satisfy \eqref{eq:RvminusRr}. Finally, for a given $p(u|s)$, we choose some $p(v|s)$ to support transmission of the refinement message $b=m$ with rate $R_r=R_m<I(U;Y)-I(U;S)-\delta_1$. This will satisfy \eqref{eq:Ru}, as $R_u=R_r+(I(U;S)+\delta_1)<I(U;Y)$. We note that, $R_r\leq H(S|U,Y)+\delta_2$ as $I(V;S|U,Y)\leq H(S|U,Y)$ in above. Therefore, for a given $p(u|s)$, the proposed coding scheme supports any state refinement rate $R_r$ satisfying
\begin{equation*}
R_r \leq \min\{I(U;Y)-I(U;S), H(S|U,Y)\}.
\end{equation*}

Utilizing the analysis above, we now detail the derivation of the amplification rate. We start by defining an event indicator random variable $E$. Consider setting $E=1$, if a decoding error ($\Ec$) occurs or the state sequence observed at the encoder ($S^n$) turns out to be non-typical. Set $E=0$, otherwise. We continue with the following set of relations. (We drop the indicies of codewords.)
\begin{align*} 
\frac{1}{n}I(S^n;Y^n)
&\stackrel{(a)}{\geq} \frac{1}{n}I(S^n;Y^n|E)-\frac{1}{n}\\
&=\frac{1}{n}I(S^n;Y^n|E=0)\Pr\{E=0\}+\frac{1}{n}I(S^n;Y^n|E=1)\Pr\{E=1\}-\frac{1}{n}\\
&\stackrel{(b)}{\geq} \frac{1}{n}I(S^n;Y^n|E=0)\Pr\{E=0\}-\frac{1}{n}\\
&\stackrel{(c)}{=} \frac{1}{n}I(S^n;Y^n,U^n,V^n|E=0)(1-\Pr\{E=1\})-\frac{1}{n}\\
&\stackrel{(d)}{\geq} \frac{1}{n}I(S^n;Y^n,U^n,V^n|E=0) - (H(S)+\epsilon_1)\Pr\{E=1\})-\frac{1}{n}\\
&\stackrel{(e)}{\geq} (H(S)-\epsilon_1) - \frac{1}{n}H(S^n|Y^n,U^n,V^n,E=0) - \epsilon_2\\
&\stackrel{(f)}{\geq} (H(S)-\epsilon_1) - H(S|Y,U,V) - \epsilon_2\\
&\stackrel{(g)}{=} I(S;Y,U)+R_r-(\epsilon_1 +\epsilon_2 + \delta_2),
\end{align*}
where (a) follows by the indicator event conditioning lemma given in Appendix~\ref{App:ConditioningLemma}, (b) follows as $I(S^n;Y^n|E=1)\geq 0$, (c) is due to the definition of $E$, wherein $E=0$ implies that $\Ec^c$, i.e., the decodability of the codewords $(U^n(M,K),V^n(L))$ form $Y^n$, (d) is by $I(S^n;Y^n,U^n,V^n|E=0)\leq H(S^n|Y^n,U^n,V^n,E=0)\leq H(S^n|E=0)\leq n(H(S)+\epsilon_1)$ for some arbitrarily small $\epsilon_1$ as $E=0$ implies that $S^n$ is typical (and $S^n$ is generated i.i.d. here), (e) is by taking $\epsilon_2=(H(S)+\epsilon_1)\Pr\{E=1\})-\frac{1}{n}$, which can be made arbitrarily small as $n\to \infty$ (as $\Pr\{E=1\}\to 0$ as $n\to\infty$), and lower bounding $H(S^n|E=0)\geq n(H(S)-\epsilon_1)$, which follows as $E=0$ implies that $S^n$ is typical (and $S^n$ is generated i.i.d. here), (f) follows as $H(S^n|Y^n,U^n,V^n,E=0)=\sum\limits_{i=1}^n H(S_i|S_1^{i-1},Y^n,U^n,V^n,E=0)\leq \sum\limits_{i=1}^n H(S_i|Y_i,U_i,V_i)=n(H(S|Y,U,V))$, and (g) is by $H(S)-H(S|Y,U,V)=I(S;Y,U,V)=I(S;Y,U)+I(S;V|Y,U)=I(S;Y,U)+R_r-\delta_2$. (This follows, as we set $R_r=I(S;V|Y,U)+\delta_2$ in the coding scheme.) From the last expression, we identify that any $R_a\leq I(S;Y,U)+R_r$ is achievable.

\textit{Leakage rate analysis:} We first show the achievability of $R_l\geq I(U,S;Z)$ with the following.
\begin{align*}
\frac{1}{n}I(S^n;Z^n)
&\leq \frac{1}{n}I(U^n,S^n;Z^n)\\
&= \frac{1}{n}H(Z^n)-\frac{1}{n}H(Z^n|U^n,S^n)\\
&= \frac{1}{n}\sum\limits_{i=1}^n H(Z_i|Z_1^{i-1}) - \frac{1}{n}\sum\limits_{i=1}^n H(Z_i|Z_1^{i-1},U^n,S^n) \\
&\stackrel{(a)}{\leq} \frac{1}{n}\sum\limits_{i=1}^n H(Z_i) - \frac{1}{n}\sum\limits_{i=1}^n H(Z_i|U_i,S_i) \\
&\stackrel{(b)}{=}  I(U,S;Z),
\end{align*}
where (a) follows as $H(Z_i|Z_1^{i-1})\leq H(Z_i)$ due to the fact that conditioning does not increase the entropy and $H(Z_i|Z_1^{i-1},U^n,S^n)=H(Z_i|U_i,S_i)$ as $(Z_1^{i-1},U_1^{i-1},S_1^{i-1},U_{i+1}^n,S_{i+1}^n) \to (U_i,S_i) \to Z_i$ forms a Markov chain (this is due to i.i.d. generation of $x_i$ from $(u_i,s_i)$, and i.i.d. generation of $z_i$ from $(x_i,s_i)$), and (b) follows as $\frac{1}{n}\sum\limits_{i=1}^n I(U_i,S_i;Z_i) = I(U,S;Z)$. (We note that single letterization argument for $H(Z^n|U^n,S^n)$ above is also given in (26) of \cite{Merhav:Information07}.)

Next, we focus on the achievability of $R_l\geq I(S;Z,U)+R_r$. We have the following.
\begin{align*}
\frac{1}{n}I(S^n;Z^n)
&\leq \frac{1}{n}I(U^n(M,K),S^n;Z^n)\\
&\stackrel{(a)}{=} \frac{1}{n}I(U^n(M,K),M;Z^n) + \frac{1}{n}I(S^n;Z^n|U^n(M,K)) \\
&\stackrel{(b)}{\leq} \frac{1}{n}H(M) + \frac{1}{n}H(U^n(M,K)|M) + \frac{1}{n}I(S^n;Z^n|U^n(M,K)) \\
&=\frac{1}{n}H(M) + \frac{1}{n}H(U^n(M,K)|M) + \frac{1}{n}H(Z^n|U^n(M,K)) - \frac{1}{n}H(Z^n|U^n(M,K),S^n)\\
&\stackrel{(c)}{\leq} R_r + I(U;S)+\delta_1 + H(Z|U) - H(Z|U,S) \\
&= I(S;Z,U) + R_r + \delta_1,
\end{align*}
where (a) is by adding $I(M;Z^n|U^n(M,K))=0$ as $M$ can be determined from $U^n(M,K)$ for a given codebook, (b) follows as $I(U^n(M,K),M;Z^n)\leq H(U^n(M,K),M) = H(M)+H(U^n(M,K)|M)$, (c) is due to having $H(M)\leq nR_r$ (a random variable with $2^{nR_r}$ values has entropy at most $nR_r$), $H(U^n(M,K)|M)\leq n(R_u-R_r)=n(I(U;S)+\delta_1)$ (here, given $M$, there are $2^{n(R_u-R_r)}$ number of $U^n$ codewords, and the entropy of this set is maximized if the index $K$ has the uniform distribution), $H(Z^n|U^n)=\sum\limits_{i=1}^n H(Z_i|Z_1^{i-1},U^n)\leq \sum\limits_{i=1}^n H(Z_i|U_i)=nH(Z|U)$, and having $H(Z^n|U^n,S^n)=\sum\limits_{i=1}^n H(Z_i|Z_1^{i-1},U^n,S^n)=\sum\limits_{i=1}^n H(Z_i|U_i,S_i)=nH(Z|U,S)$, where the second inequality holds as $(Z_1^{i-1},U_1^{i-1},S_1^{i-1},U_{i+1}^n,S_{i+1}^n) \to (U_i,S_i) \to Z_i$ forms a Markov chain (as detailed in the previous paragraph). (We note that single letterization arguments for $H(Z^n|U^n)$ and $H(Z^n|U^n,S^n)$ above are also given in (24) and (26) of \cite{Merhav:Information07}.)

This concludes the proof of Theorem~\ref{thm:Rin1}, where we take the union of the achievable pairs over all $p(u|s)\times p(x|u,s)=p(u,x|s)$ distributions.


\subsubsection{Message transmission}
We note that it is possible to allocate a part of the Gel'fand-Pinsker coding rate in the scheme proposed above in order to transmit messages. In particular, if $(R_a,R_l,R_r)$ satisfies the inequalities given in Theorem~\ref{thm:Rin1}, then $(R_a,R_l,R_m)$ is achievable with a message rate of $R_m=I(U;Y)-I(U;S)-R_r$. That is, the Gel'fand-Pinsker coding rate given by $I(U;Y)-I(U;S)$ can be divided into refinement rate $R_r$ and message rate $R_m$.


\subsubsection{State sequence covering}

In the region given by Theorem~\ref{thm:Rin1}, increasing  $R_r$ will not only increase the amplification rate but will also increase the leakage rate. Therefore, for some scenarios, implementing only a covering scheme might be advantageous. By choosing an arbitrarily small refinement rate $R_r$ in Theorem~\ref{thm:Rin1}, the following region can be achieved.

\begin{corollary}\label{thm:Rin2}[Covering]
Let $\Rc^{2}$ be the closure of the union of all $\left( R_a, R_l \right)$ pairs satisfying
\begin{eqnarray}
R_a &\leq& I(S;Y,U) \nonumber\\
R_l &\geq& \min \big\{I(U,S;Z), I(S;Z,U) \big\}\nonumber
\end{eqnarray}
over all distributions $p(u,x|s)$ satisfying $I(U;Y)> I(U;S)$.
Then, $\Rc^{2}\subseteq\Cc$.
\end{corollary}
We note that the rate of refinement, $R_r=I(V;S|U,Y)+\delta_2$, is set to an arbitrarily small value here, and the codeword $V^n$ only serves as a covering of the state sequence. Further implications of this covering scheme on the amplification-leakage region is discussed in Section~\ref{sec:RevDegDMC}. We note that, by transmission of a covering of the state, the leakage rate is shown to satisfy $R_l \geq \min \big\{I(U,S;Z), I(S;Z,U) \big\}$. Remarkably, one can guarantee such a bound, even if some state dependent information is transmitted over the channel. In particular, if the channel seen by Bob is stronger than the one seen by Eve, one can send the refinement information securely over the state dependent channel. This approach is detailed in the next section.


\subsection{Secure refinement}

Consider all input distributions $p(u,x|s)$ satisfying $I(U;Y)\geq I(U;Z)$. For such distributions, it is possible to send refinement information securely over the channel. This way, the leakage increase due to refinement index is decreased as the security of the index will lower the corresponding leakage rate achieved at Eve compared to the non-secured case. In the following, we first focus on transmission of secure messages over the state dependent channel, and then detail the proposed secure refinement approach.


\subsubsection{Secure message transmission over state dependent channels}
Consider that a transmitter wants to send a secure message $M$ over the state dependent channel in the presence of eavesdropper. This problem is studied in~\cite{Chen:Wiretap08}, and, we will revisit it here. In particular, we give a codebook construction and provide a lemma that upper bounds $I(M;Z^n)$, the leakage to the eavesdropper~\footnote{The codebook we provide here is a special case of the one proposed in~\cite{Chen:Wiretap08}, which considers an extended version for an equivocation rate analysis.}. This result is then utilized in the following part, in showing the proposed secure refinement approach.

\textit{Codebook Generation:} We divide the codebook construction in two parts, depending on whether $I(U;Z)> I(U;S)$ or not. If $I(U;Z)> I(U;S)$, generate $2^{nR_u}$ codewords $U^n(M,T,K)$, where $M\in[1:2^{nR_m}]$, $K\in[1:2^{n(I(U;S)+\delta)}]$, and $T\in[1:2^{n(R_u-R_m-I(U;S)-\delta)}]$. Here, $T$ is randomly selected. We set $R_m=I(U;Y)-I(U;Z)$ and $R_u=I(U;Y)-\delta$, which imply $H(T)=n(I(U;Z)-I(U;S)-2\delta)$. If $I(U;Z)\leq I(U;S)$, generate $2^{nR_u}$ codewords $U^n(M,K)$, where $M\in[1:2^{nR_m}]$, $K\in[1:2^{n(I(U;S)+\delta)}]$. We set $R_m=I(U;Y)-I(U;S)-2\delta$ and $R_u=I(U;Y)-\delta$. In both cases, $M$ is the secure message index, and $K$ is used as a covering index (similar to the previous section).

The above codebook construction is the same as that of Gel'fand-Pinsker codebook (described in the previous section), with the only difference being that, for probability distributions satisfying $I(U;Z)> I(U;S)$, we select a part of the message as random (represented as $T$ in the paragraph above). This enables to argue that the remaining part of the Gel'fand-Pinsker message (represented as $M$ in the paragraph above) is secure against the eavesdropper (in terms of the leakage rate). We have the following.

\begin{lemma}\label{thm:Leakage}
For the codebook generation given above, for some $\epsilon\to 0$ as $n\to \infty$,
\begin{align*}
I(M;Z^n)\leq nI(S;Z,U)+H(S^n|U^n,Z^n)-H(S^n|M,T)+n\epsilon,
\end{align*}
where $T$ is a random variable uniformly distributed on the set $[1:2^{n(I(U;Z)-I(U;S)-2\delta)}]$ for $I(U;Z)> I(U;S)$, and $T=\emptyset$ for $I(U;Z)\leq I(U;S)$.
\end{lemma}

\begin{IEEEproof}
See Appendix~\ref{App:Leakage}.
\end{IEEEproof}

This result is utilized in the next section for the secure refinement approach. Here, we note the following corollary, which is an alternate proof of security in state dependent wiretap channels.

\begin{corollary}\label{thm:WiretapState}
For the codebook generation given above, if $M$ is independent of $S^n$, then 
\begin{align*}
\frac{1}{n}I(M;Z^n)\leq \epsilon,
\end{align*}
for some $\epsilon\to 0$ as $n\to \infty$.
\end{corollary}

\begin{IEEEproof}
See Appendix~\ref{App:WiretapState}.
\end{IEEEproof}


\subsubsection{Secure refinement via secure and state dependent message transmission}

In the previous section, a refinement approach is proposed where a Gel'fand-Pinsker coded message is utilized to resolve some ambiguity regarding the channel state at Bob. In such an approach, the message rate is utilized as the bin index of a covering ($V^n$) codebook. (See proof of Theorem~\ref{thm:Rin1}.) Here, we consider transmission of this refinement message securely to Bob by utilizing the codebook construction given above. As the message rate is modified from $I(U;Y)-I(U;S)$ to secure message rate $I(U;Y)-\max\{I(U;S),I(U;Z)\}$, this modification results in a refinement rate of $R_r \leq \min\big\{I(U;Y)-\max\{I(U;S),I(U;Z)\},H(S|Y,U)\big\}$. However, the leakage rate due refinement is now independent of $R_r$. The corresponding region is given by the following.

\begin{theorem}\label{thm:Rin3}
Let $\Rc^3$ be the closure of the union of all $\left( R_a, R_l \right)$ pairs satisfying
\begin{eqnarray}
R_a &\leq& I(S;Y,U)+R_r\nonumber\\
R_l &\geq& \min \{I(U,S;Z), I(S;Z,U)\}\nonumber\\
R_r &\leq& \min\big\{I(U;Y)-\max\{I(U;S),I(U;Z)\},H(S|Y,U)\big\}\nonumber
\end{eqnarray}
over all distributions $p(u,x|s)$ satisfying $I(U;Y)\geq I(U;S)$
Then, $\Rc^3\subseteq\Cc$.
\end{theorem}

Note that, the amplification rate that can be obtained with such an approach is lower than the previous case (Theorem~\ref{thm:Rin1}), as the message rate $R_r\leq I(U;Y)-I(U;Z) < I(U;Y)-I(U;S)$ if $I(U;Z)> I(U;S)$. Therefore, the improvement on the leakage expression compared to Theorem~\ref{thm:Rin1} is obtained with a degradation on the amplification rate.

\begin{IEEEproof}
We use the codebook generation given above. The only difference compared to the codebook construction utilized in the proof of Theorem~\ref{thm:Rin1} is that part of the message (i.e., $T$) is selected as random when $I(U;Z)> I(U;S)$. Therefore, the amplification rate analysis is the same as that of the one given in the proof of Theorem~\ref{thm:Rin1} with $R_r$ bounded by $R_r\leq I(U;Y)-\max\{I(U;S),I(U;Z)\}$ instead of $R_r\leq I(U;Y)-I(U;S)$. The proof of $R_l \geq I(U,S;Z)$ follows from the same steps given in the proof of Theorem~\ref{thm:Rin1}. Here, we show that $R_l \geq I(S;Z,U)$ holds as well.

Consider the following.
\begin{align*}
I(S^n;Z^n) 
&\leq I(M,S^n;Z^n)\\
&= I(M;Z^n) + I(S^n;Z^n|M)\\
&\stackrel{(a)}{\leq} [nI(S;Z,U)+H(S^n|U^n,Z^n)-H(S^n|M,T)+n\epsilon] + H(S^n|M) - H(S^n|M,Z^n)\\
&= nI(S;Z,U)+n\epsilon + H(S^n|U^n,Z^n) - H(S^n|M,Z^n) + H(S^n|M) - H(S^n|M,T)\\
&\stackrel{(b)}{\leq} nI(S;Z,U)+n\epsilon + I(S^n;T|M)\\
&\stackrel{(c)}{=} nI(S;Z,U)+n\epsilon,
\end{align*}
where (a) follows from Lemma~\ref{thm:Leakage}, (b) follows as $H(S^n|U^n,Z^n)=H(S^n|U^n,M,Z^n)\leq H(S^n|M,Z^n)$ and $H(S^n|M) - H(S^n|M,T)=I(S^n;T|M)$, and (c) follows by observing $I(S^n;T|M)=0$, where $T=\emptyset$ for $I(U;Z)\leq I(U;S)$; and, $T$ is independently generated random variable for $I(U;Z)>I(U;S)$, implying that $H(T|M)=H(T|M,S^n)=H(T)$. As $\epsilon$ can be made arbitrarily small, we conclude from last expression that any $R_l\geq I(S;Z,U)$ is achievable.
\end{IEEEproof}


\section{Outer Bounds}
\label{sec:OuterBound}

In this section, we provide outer bounds on the achievable amplification-leakage rate region. In particular, we derive regions denoted by $\Rc_o$, to which any achievable $(R_a,R_l)$ must belong.

\begin{proposition}\label{thm:Rout1}
If $(R_a,R_l)$ is achievable, then $(R_a,R_l)\in \Rc_o^{1}$, where $\Rc_o^{1}$ is the closure of the union of all $(R_a,R_l)$ pairs satisfying
\begin{eqnarray}
R_a &\leq& \min \left\{H(S), I(X,S;Y) \right\}\nonumber\\
R_l &\geq& I(S;Z,U)\nonumber
\end{eqnarray}
over all $p(u,x|s)$ distributions satisfying $I(U;Z)\geq I(U;S)$.
\end{proposition}

\begin{IEEEproof}
See Appendix~\ref{App:Rout1}.
\end{IEEEproof}

If the channel is degraded, wherein $p(y,z|x,s)=p(y|x,s)p(z|y)$, the following outer bound can be obtained.

\begin{proposition}\label{thm:Rout2}
If the channel satisfies $p(y,z|x,s)=p(y|x,s)p(z|y)$ and if $(R_a,R_l)$ is achievable, then $(R_a,R_l)\in \Rc_o^{2}$, where $\Rc_o^{2}$ is the closure of the union of all $(R_a,R_l)$ pairs sarisfying
\begin{eqnarray}
R_a &\leq& \min \left\{H(S), I(X,S;Y) \right\}\nonumber\\
R_l &\geq& I(S;Z,U)\nonumber\\
R_a-R_l &\leq& I(X,S;Y|Z)\nonumber
\end{eqnarray}
over all $p(u,x|s)$ distributions satisfying $I(U;Y)\geq I(U;S)$.
\end{proposition}

\begin{IEEEproof}
See Appendix~\ref{App:Rout2}.
\end{IEEEproof}
These outer bound regions are used in the following to establish special case results.


\section{Special discrete memoryless channel models}
\label{sec:DMC}


\subsection{Reversely degraded DMCs}
\label{sec:RevDegDMC}

We say that the channel is reversely degraded if $(X,S)\to Z \to Y$ forms a Markov Chain. Note that, this corresponds to a stronger channel seen by Eve compared to that of Bob. Therefore, reversely degraded scenarios imply $C_d\leq 0$, meaning that the state knowledge at Bob is not higher than that of at Eve. We have the following result for this set of channels.

\begin{theorem}\label{thm:RevDegDMC}
The optimal differential amplification rate for reversely degraded DMCs is given by
\begin{eqnarray}
C_d &=& \max\limits_{p(x|s)} I(S;Y)-I(S;Z)\nonumber
\end{eqnarray}
\end{theorem}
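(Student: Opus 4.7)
\emph{Achievability ($C_d \geq \max_{p(x|s)}[I(S;Y) - I(S;Z)]$).} I would specialize Proposition~\ref{thm:Rin1} by taking the auxiliary $U$ to be a constant (equivalently, independent of $(S,X,Y,Z)$). The decodability constraint $0 \leq I(U;Y) - I(U;S)$ then holds trivially, and the region $\mathcal{R}^1$ collapses to $R_a \leq I(S;Y)$ and $R_l \geq \min\{I(S;Z), I(S;Z)\} = I(S;Z)$. Consequently the pair $(I(S;Y), I(S;Z))$ is achievable for every $p(x|s)$, so $R_a - R_l$ attains $I(S;Y) - I(S;Z)$; maximizing over $p(x|s)$ yields the claimed lower bound on $C_d$.

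\emph{Converse setup.} From the definitions of achievability,
\[
n(R_a - R_l) \leq I(S^n;Y^n) - I(S^n;Z^n) + 2n\epsilon.
\]
The reversely degraded assumption $(X,S) \to Z \to Y$ lifts to the full-sequence Markov chain $S^n \to Z^n \to Y^n$, so $I(S^n;Y^n \mid Z^n) = 0$ and therefore
\[
I(S^n;Y^n) - I(S^n;Z^n) = -I(S^n;Z^n \mid Y^n).
\]
The task is to lower-bound $\tfrac{1}{n}I(S^n;Z^n \mid Y^n)$ by a single-letter expression. I would expand
\[
I(S^n;Z^n \mid Y^n) = \sum_i I(S_i;Z^n \mid Y^n, S^{i-1}) \;\geq\; \sum_i I(S_i; Z_i \mid Y^n, S^{i-1}),
\]
and define the auxiliary $U_i = (S^{i-1}, Y^{i-1}, Y_{i+1}^n)$. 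Memorylessness of the channel $p(y|z)$ implies $Y_i \perp (S_i, X_i, U_i) \mid Z_i$, so the conditional Markov chain $(S_i, X_i) \to Z_i \to Y_i$ holds given $U_i$. This lets me rewrite each term as $I(S_i; Z_i \mid U_i) - I(S_i; Y_i \mid U_i)$. Applying a standard time-sharing random variable $T$ uniform on $\{1, \dots, n\}$, and setting $U = (U_T, T)$, $S = S_T$, $X = X_T$, $Y = Y_T$, $Z = Z_T$, the construction yields
\[
\tfrac{1}{n}\big[I(S^n;Y^n) - I(S^n;Z^n)\big] \leq I(S;Y \mid U) - I(S;Z \mid U),
\]
for a joint distribution factoring as $p(u,s)\, p(x \mid s,u)\, p(z \mid x,s)\, p(y \mid z)$.

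\emph{Closure and main obstacle.} The last step is to show that the auxiliary-conditional gap $I(S;Y \mid U) - I(S;Z \mid U)$ is bounded by $\max_{p(x|s)}[I(S;Y) - I(S;Z)]$, where the max is taken over input distributions consistent with the original source marginal $S \sim p(s)$. For each realization $U = u$, the conditional law on $(S, X, Y, Z)$ is still a reversely degraded channel, and the per-$u$ gap is bounded by the single-letter maximum for the conditional source $p(s \mid u)$. The principal obstacle is that $p(s \mid u)$ need not equal $p(s)$: the past/future outputs collected in $U_i$ can reveal information about $S_i$ via the non-causal encoder. I would resolve this by establishing a concavity property for the reversely degraded amplification–leakage gap $f(p_S) := \max_{p(x|s)}[I(S;Y) - I(S;Z)]$ as a function of the source distribution; Jensen's inequality then gives $\sum_u p(u)\, f(p(s \mid u)) \leq f\big(\sum_u p(u)\, p(s \mid u)\big) = f(p(s))$, closing the bound. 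Letting $\epsilon \to 0$ completes the converse. Verifying this structural concavity for reversely degraded DMCs is the crux of the proof; an alternative route would be to redefine $U$ so as to preserve $S \perp U$ while retaining the conditional Markov structure, but at the cost of a more delicate single-letterization.
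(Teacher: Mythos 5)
Your achievability step is exactly the paper's (take $U=\phi$ in Proposition~\ref{thm:Rin1}), and your converse up to the auxiliary-conditioned bound is valid and mechanically different from the paper's in a pleasant way: writing $I(S^n;Y^n)-I(S^n;Z^n)=-I(S^n;Z^n\mid Y^n)$ (legitimate, since memorylessness plus $p(y,z|x,s)=p(z|x,s)p(y|z)$ gives $S^n\to Z^n\to Y^n$), expanding by the chain rule over $S_i$, and taking $U_i=(S^{i-1},Y^{i-1},Y_{i+1}^n)$ avoids the double use of Csisz\'ar's sum lemma that the paper needs with its auxiliary $U_i=(Y_1^{i-1},Z_{i+1}^n,S_1^{i-1},S_{i+1}^n)$. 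Both routes land at $\frac{1}{n}\sum_i\bigl[I(S_i;Y_i\mid U_i)-I(S_i;Z_i\mid U_i)\bigr]$ with the conditional Markov structure $(U,S)\to(X,S)\to Z\to Y$, so up to this point your derivation is a correct, somewhat cleaner alternative.

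The genuine gap is the closing step, which you flag but do not resolve. You need $I(S;Y\mid U)-I(S;Z\mid U)\le\max_{p(x|s)}\bigl[I(S;Y)-I(S;Z)\bigr]$ with the right-hand side evaluated at the true state law $p(s)$, and your plan hinges on concavity of $f(p_S)=\max_{p(x|s)}[I(S;Y)-I(S;Z)]$ in $p_S$, which you leave unverified. This is not a routine lemma: by degradedness $I(S;Y)-I(S;Z)=-I(S;Z\mid Y)$ is, for fixed $p(x|s)$, a difference of two functionals each concave in $p_S$, and a pointwise maximum over $p(x|s)$ of such differences has no standard reason to be concave (maxima destroy concavity), so the Jensen step you rely on may simply be unavailable. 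The paper closes this step by a different device: it bounds the average over $u$ by the best single value $u^*$ (``randomizing over $u$ cannot increase the sum'') and then observes that the conditional encoder law $p^*(x|u^*,s)$ is itself an admissible $p(x|s)$, i.e., the conditioning is absorbed into the optimization over input distributions rather than into a perturbed state law. Your observation that $p(s\mid u^*)$ need not equal $p(s)$ is indeed the delicate point — the paper treats it only implicitly — but as submitted your argument stops one step short of the theorem: you must either prove the concavity you posit (or the weaker statement that the concave envelope of $f$ touches $f$ at $p(s)$), or adapt the paper's best-$u^*$ reduction, or redesign $U_i$ so that $S_i\perp U_i$, none of which is carried out.
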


\begin{IEEEproof}Achievability of the stated difference follows from Theorem~\ref{thm:Rin1} by substituting $U=\emptyset$. We provide the converse in Appendix~\ref{App:RevDegDMC}.
\end{IEEEproof}

Note that coding can not improve this difference as the channel is reversely degraded. Thus, coding might help to increase $R_a$ at the expense of possibly decreasing $R_a-R_l$ for the reversely degraded scenario. This is also the case for the covering scheme given by Corollary~\ref{thm:Rin2}. That is, $R_a$ vs. $R_a-R_l$ can be traded-off using different input distributions. ($U=\emptyset$ case will correspond to the maximum $R_a-R_l$, and achieve $C_d$.)


\subsection{Modulo additive binary model}

Consider the channels given by
\begin{eqnarray*}
Y_i&=&X_i \oplus S_i \oplus N_i\nonumber\\
Z_i&=&X_i \oplus S_i \oplus \tilde{N}_i,
\end{eqnarray*}
where the state and noise distributions are generated i.i.d. as $S_i\sim \textrm{Bern}(p_s)$, $N_i\sim\textrm{Bern}(p_n)$, $\tilde{N}_i\sim\textrm{Bern}(p_{n_z})$. (All $p_k$s satisfy $p_k\in[0,0.5]$ for $k\in\{s,n,n_z\}$.) In this section, we use the following notation for the binary convolution $p\otimes q \triangleq p(1-q) + q(1-p)$.


\subsubsection{State cancellation scheme}

To cancel the state from the channel, we send
\begin{eqnarray*}
X_i=U_i\oplus S_i,
\end{eqnarray*}
where $U_i\sim \textrm{Bern}(p_u)$ and the codewords $U^n$ carry a description of the state sequence $S^n$. This way, we achieve the following inner-bound.

\begin{corollary}\label{thm:InnerStateCancellationBinary}
The state cancellation scheme, which sends $\textrm{Bern}(p_u)$ distributed signal XORed with state sequence at each time instant, achieves the set of $\left( R_a, R_l\right)$ pairs denoted by the region $\Rc^{\textrm{SC}}$, where
\begin{eqnarray*}
\Rc^{\textrm{SC}} = \textrm{Convex Hull} \left\{
\bigcup\limits_{p_u\in[0,0.5],p_u\otimes p_s \leq 0.5}
\left( R_a(p_u),
R_l(p_u)\right)
\right\} \subseteq \Cc,
\end{eqnarray*}
with
\begin{eqnarray*}
R_a(p_u) &\leq& \min \left\{ H(p_s), H(p_u \otimes p_n) - H(p_n) \right\} \\
R_l(p_u) &\geq& H(p_u \otimes p_{n_z}) - H(p_{n_z}).
\end{eqnarray*}
\end{corollary}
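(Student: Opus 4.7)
The plan is to obtain the corollary as a direct specialization of Proposition~\ref{thm:Rin2}. I would take the auxiliary variable to be $U \sim \mathrm{Bern}(p_u)$ drawn independently of $S$, with channel input $X = U \oplus S$. The key observation is that $X \oplus S = U$, so the channel outputs collapse to $Y = U \oplus N$ and $Z = U \oplus N_z$; in particular, both outputs are independent of $S$ once conditioned on $U$, which means this auxiliary $U$ plays the role of a Gel'fand--Pinsker-like covering of $S$ that precancels the state at both receivers.

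With this choice I would first verify the decodability constraint $I(U;Y) \geq I(U;S)$ required by Proposition~\ref{thm:Rin2}. Since $U \perp S$, the right-hand side is zero, while $I(U;Y) = H(p_u \otimes p_n) - H(p_n) \geq 0$ in the stated parameter range (which follows from $p_u + p_n - 2 p_u p_n \geq p_n$ whenever $p_n \leq 1/2$). The remaining step is a short calculation: $H(Y \mid U, S) = H(N) = H(p_n)$ and $H(Y) = H(p_u \otimes p_n)$ give $I(U,S;Y) = H(p_u \otimes p_n) - H(p_n)$, and the analogous computation on Eve's channel yields $I(U,S;Z) = H(p_u \otimes p_{n_z}) - H(p_{n_z})$. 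Plugging these into the bounds of Proposition~\ref{thm:Rin2} reproduces the per-$p_u$ rate pair claimed in the corollary, with the outer $\min\{H(p_s),\cdot\}$ on $R_a$ inherited directly from the $H(S)$ term.

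Finally I would invoke the standard time-sharing argument: since $\Cc$ is closed under time-sharing and closure, taking the union over admissible $p_u \in [0, 1/2]$ and then the convex hull yields $\Rc^{\mathrm{SC}} \subseteq \Cc$. The range restriction $p_u \otimes p_s \leq 1/2$ is automatic once $p_u, p_s \in [0, 1/2]$, so it imposes no real constraint and only serves to fix a canonical representative for each binary distribution. I do not anticipate any serious obstacle here; the corollary is essentially the observation that XORing the input with the state converts both links into binary-symmetric channels driven by $U$, at which point the single-letter inner bound of Proposition~\ref{thm:Rin2} can be read off term by term.
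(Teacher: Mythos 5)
Your proposal is correct and matches the paper's approach: the paper likewise obtains the corollary by specializing Proposition~\ref{thm:Rin2}, and your choice $U\sim\mathrm{Bern}(p_u)$ independent of $S$ with $X=U\oplus S$ is exactly the evaluation that yields $I(U,S;Y)=H(p_u\otimes p_n)-H(p_n)$, $I(U,S;Z)=H(p_u\otimes p_{n_z})-H(p_{n_z})$, and the trivially satisfied constraint $I(U;Y)\geq I(U;S)=0$. The only difference is that you spell out the computation and the time-sharing/convex-hull step, which the paper leaves implicit.
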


\begin{IEEEproof}
Achievability follows from Corollary~\ref{thm:Rin2}.
\end{IEEEproof}


\subsubsection{Optimal differential amplification rate}

\begin{corollary}
If $p_{n}\leq p_{n_z}$ and $H(p_s) \geq 1-H(p_n)$ for a binary model, the optimal amplification and leakage rate difference is given by 
\begin{equation}
C_d = H(p_{n_z}) - H(p_n). \nonumber
\end{equation}
\end{corollary}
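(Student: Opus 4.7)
The plan is to establish achievability through the state cancellation inner bound of Corollary~\ref{thm:InnerStateCancellationBinary}, and match it with a converse obtained from Proposition~\ref{thm:Rout2} after a marginal-preserving recoupling of the noises that makes the channel degraded.

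For achievability, I will set $p_u = 1/2$ in Corollary~\ref{thm:InnerStateCancellationBinary}. Since $1/2 \otimes p = 1/2$ for any $p$, the inner bound collapses to $R_a \leq \min\{H(p_s),\, 1-H(p_n)\}$ and $R_l \geq 1-H(p_{n_z})$. The hypothesis $H(p_s) \geq 1-H(p_n)$ makes the second term active in the $R_a$ bound, producing the achievable pair $(R_a, R_l) = (1-H(p_n),\, 1-H(p_{n_z}))$ and hence $R_a - R_l = H(p_{n_z}) - H(p_n)$.

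For the converse, the key observation is that both $\tfrac{1}{n} I(S^n;Y^n)$ and $\tfrac{1}{n} I(S^n;Z^n)$ depend on the channel only through the marginals $p(y|x,s)$ and $p(z|x,s)$. I may therefore replace the given channel (with independent $N$, $N_z$) by one having the same marginals but a degraded joint: write $N_z = N \oplus N'$, with $N' \sim \textrm{Bern}(p')$ independent of $(X,S,N)$ and $p' = (p_{n_z}-p_n)/(1-2p_n)$; the hypothesis $p_n \leq p_{n_z} \leq 1/2$ ensures $p' \in [0,1/2]$, so this coupling is valid and produces the correct marginal law for $Z$. Under it, $Z = Y \oplus N'$, so the channel is degraded and Proposition~\ref{thm:Rout2} applies, yielding $R_a - R_l \leq I(X,S;Y|Z)$ for some $p(u,x|s)$.

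The final step is to upper-bound $I(X,S;Y|Z) = H(Y|Z) - H(Y|X,S,Z)$ uniformly in $p(u,x|s)$. Given $(X,S,Z)$ the noise $N_z = Z \oplus X \oplus S$ is determined, so $H(Y|X,S,Z) = H(N|N_z)$; under the chosen coupling this equals $H(p_n) + H(p') - H(p_{n_z})$, a constant. On the other side, $Y = Z \oplus N'$ gives $H(Y|Z) = H(N'|Z) \leq H(N') = H(p')$. Subtracting, $I(X,S;Y|Z) \leq H(p_{n_z}) - H(p_n)$, which matches the achievable difference. I expect the only nonroutine step to be justifying the marginal-preserving recoupling at the start of the converse, since Proposition~\ref{thm:Rout2} is stated for genuinely degraded channels; once that reduction is in place the remainder is a short entropy calculation on binary variables.
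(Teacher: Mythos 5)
Your proposal is correct and follows essentially the same route as the paper: achievability via the state-cancellation scheme of Corollary~\ref{thm:InnerStateCancellationBinary} with $p_u=1/2$ (using $H(p_s)\geq 1-H(p_n)$), and a converse from the degraded outer bound of Proposition~\ref{thm:Rout2} after recoupling the noises as $N_z=N\oplus N'$. The differences are only in bookkeeping: you bound $I(X,S;Y|Z)$ directly via $H(Y|Z)\leq H(N')$ and $H(Y|X,S,Z)=H(N|N_z)$, whereas the paper writes it as $I(X,S;Y)-I(X,S;Z)$ and shows the residual term $H(Y)-H(Z)$ is nonpositive; you also spell out the marginal-preserving recoupling (valid since the trade-off region depends only on $p(y|x,s)$ and $p(z|x,s)$) that the paper invokes implicitly.
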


\begin{IEEEproof}
From Proposition~\ref{thm:Rout2}, we obtain the following. If $p_{n}\leq p_{n_z}$, any given $(R_a,R_l)\in \Cc$ satisfies
\begin{equation*}
R_a - R_l \leq H(p_{n_z}) - H(p_n) +
\max\limits_{p(x|s)}
\left\{ H(X \oplus S \oplus N) - H(X \oplus S \oplus N_z)
\right\}.
\end{equation*}
Note that, this upper-bound can be evaluated by observing

$\max\limits_{p(x|s)}
\left\{ H(X \oplus S \oplus N) - H(X \oplus S \oplus N_z)
\right\} $
\begin{eqnarray*}
&=&
\max\limits_{p(x|s)}
\left\{ H(X \oplus S \oplus N) - H(X \oplus S \oplus N \oplus N_z^*)
\right\}\nonumber\\
&\leq&
\max\limits_{p(x|s)}
\left\{ H(X \oplus S \oplus N) - H(X \oplus S \oplus N \oplus N_z^* | N_z^*)
\right\}\nonumber\\
&=& 0,
\end{eqnarray*}
where the equality is due to the channel degradedness condition with appropriate noise term $N_z^*$ independent of $N$ such that $N \oplus N_z^*=N_z$, and the inequality is due to the fact that conditioning does not increase the entropy. Using this we observe that the outer-bound is maximized with a choice of $p(x)=0.5$, which evaluates to 
\begin{equation*}
R_a - R_l \leq H(p_{n_z}) - H(p_n).
\end{equation*}
This expression is achieved by Corollary~\ref{thm:InnerStateCancellationBinary}, when we choose
$p(u)=0.5$, if $H(p_s)\geq 1-H(p_n)$.
\end{IEEEproof}


\subsection{Memory with defective cells model}

We consider the model of information transmission over write-once memory device with stuck-at defective cells~\cite{Kuznetsov:Coding74,Heegard:Capacity81,Heegard:capacity83}. In this channel model, each memory cell corresponds to a channel state instant with cardinality $|\Sc|=3$, where the binary channel output is determined from the binary channel input and the channel state as:

\begin{figure}[t]
    \centering
    \includegraphics[width=0.5\columnwidth]{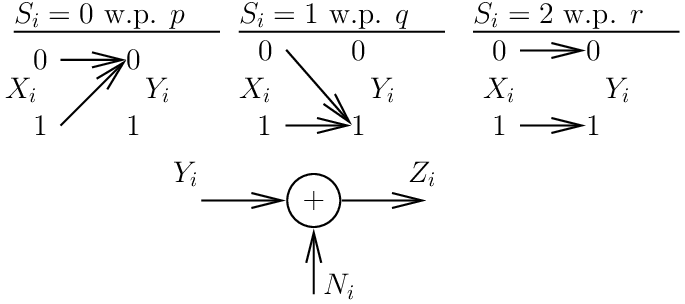}
    \caption{Channel model of memory with defective
    cells. $p=\Prob\{S=0\}$ (probability of being stuck at $0$),
    $q=\Prob\{S=1\}$ (probability of being stuck at $1$),
    $r=\Prob\{S=2\}$ (probability of being in a noiseless state),
    and $N\sim Bern(n)$, where $n\in[0,0.5]$ is the cross over
    probability of the BSC from $Y$ to $Z$.
    }
    \label{fig:ModelMemDef}
\end{figure}

\begin{eqnarray*}
p(y=0|x,s=0) &=& 1\nonumber\\
p(y=1|x,s=1) &=& 1\\
p(y=x|x,s=2) &=& 1,\nonumber
\end{eqnarray*}
where $\Prob\{S=0\}=p$ is the probability that the channel is stuck at $0$, $\Prob\{S=1\}=q$ is the probability that the channel is stuck at $1$, and $\Prob\{S=2\}=r$ is the probability of having a good channel where $y=x$ with $p+q+r=1$. We consider a binary symmetric channel (BSC) from $Y$ to $Z$, where
\begin{eqnarray*}
Z_i=Y_i \oplus N_i,
\end{eqnarray*}
with $N_i\sim Bern(n)$ for some $n\in[0,0.5]$. This corresponds to a degraded DMC model. (See Fig.~\ref{fig:ModelMemDef}.)

We present numerical results for this channel model with three regions: Uncoded region, coded region, and an outer-bound region. The uncoded region is obtained by setting $U=\emptyset$ in Theorem~\ref{thm:Rin1}, where we have the set of $(R_a,R_l)$ pairs satisfying
\begin{eqnarray*}
R_a &\leq& I(S;Y)\\
R_l &\geq& I(S;Z)
\end{eqnarray*}
over all possible $p(x|s)$. For the coded region, we simulate a sub-region of the one given in Theorem~\ref{thm:Rin1}, where we set $U=Y$ and achieve the set of $(R_a,R_l)$ pairs satisfying
\begin{eqnarray*}
R_a &\leq& \min\{H(S),H(Y)\}\\
R_l &\geq& I(Y,S;Z)=H(Z)-H(N)
\end{eqnarray*}
over all possible $p(x|s)$. For converse arguments, we consider the outer-bound region given by the set of $(R_a,R_l)$ pairs satisfying
\begin{eqnarray*}
R_a &\leq& \min\{H(S),I(X,S;Y)=H(Y)\}\\
R_l &\geq& I(S;Z)
\end{eqnarray*}
over all possible $p(x|s)$. This outer-bound region follows from Proposition~\ref{thm:Rout1}. We evaluate the regions above in terms of the channel parameters as follows. Let $\Prob\{X=1\}=\alpha$. Then,
\begin{eqnarray*}
H(S) &=& H(p,q,r), \\
H(Y|S) &=& r H(\alpha) \\
H(Y) &=& H(q+r\alpha) \\
H(Z|S) &=& (p+q)H(n)+r H(\alpha \otimes n) \\
H(Z) &=& H((q+r\alpha) \otimes  n),
\end{eqnarray*}
where $H(\cdot,\cdot,\cdot)$ is the ternary entropy function, $H(\cdot)$ is the binary entropy function, and $\otimes$ is the binary convolution given by $p\otimes q = p(1-q)+q(1-p)$. The numerical results are given in Fig.~\ref{fig:SimMemDef}. The regions are truncated with $R_l\leq H(S)$ as any $R_l>H(S)$ is trivially achievable. We note that, the coded region is potentially larger than its uncoded counterparts even when we only compute a subset of the coded achievable region. This shows the gains that can be leveraged by the proposed scheme, i.e., sending a refinement of the state sequence over the channel.

\begin{figure}[t]
    \centering
    \includegraphics[width=1\columnwidth]{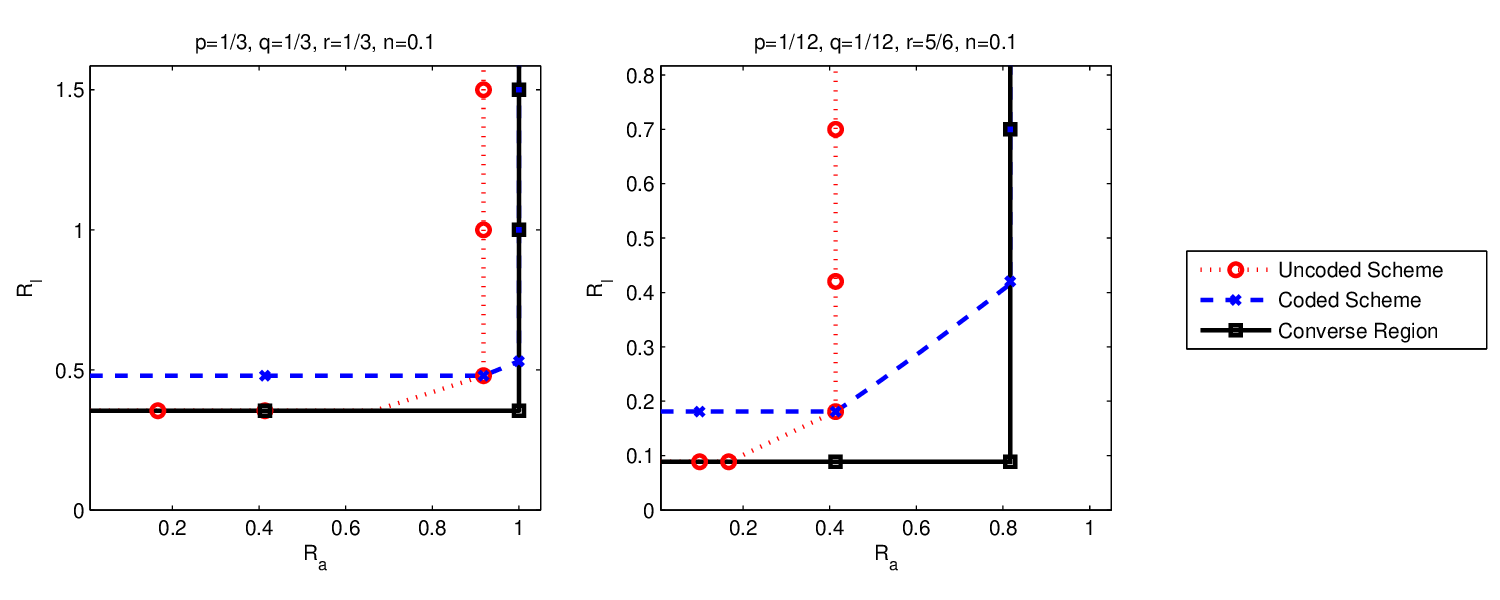}
    \caption{Simulation results for memory with defective
    cells model.
    }
    \label{fig:SimMemDef}
\end{figure}


\section{Gaussian Scenario}
\label{sec:Gauss}

Consider the channels given by
\begin{eqnarray*}
Y_i&=&X_i+S_i+N_i\nonumber\\
Z_i&=&X_i+S_i+\tilde{N}_i,
\end{eqnarray*}
where the state and noise distributions are generated i.i.d. as $S_i\sim\Nc(0,\sigma_s^2)$, $N_i\sim\Nc(0,\sigma_n^2)$, $\tilde{N}_i\sim\Nc(0,\sigma_{n_z}^2)$, and the cost constraint on the channel input is given by $c(x)=x^2$ and $C=P$, i.e., $\frac{1}{n}\sum\limits_{i=1}^n \EE \{X_i^2\}\leq P.$ (See Fig.~\ref{fig:ModelGauss}.)

\begin{figure}[t]
    \centering
    \includegraphics[width=0.3\columnwidth]{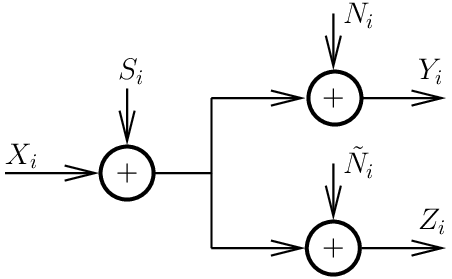}
    \caption{The channel model for the Gaussian setting.
    $S_i\sim \Nc(0,\sigma_s^2)$, $N_i\sim \Nc(0,\sigma_n^2)$,
    and
    $\tilde{N}_i\sim \Nc(0,\sigma_{n_z}^2)$.
    }
    \label{fig:ModelGauss}
\end{figure}


\subsection{An inner bound for $\Cc$ using an uncoded scheme}\label{sec:uncoded}

The inner bound is based on sending an amplified version of $S$ together with some additional Gaussian noise. This \emph{uncoded} signal is constructed as follows.
\begin{eqnarray}\label{eqn:uncoded}
X_i=\rho \frac{\sigma_x}{\sigma_s}S_i + \sqrt{(1-\rho^2)} \sigma_x T_i,
\end{eqnarray}
where $T_i\sim\Nc(0,1)$ independent of $S_i$, $\rho\in[-1,1]$, and $\sigma_x^2\leq P$. Here, $\rho^2$ is the fraction of the power allocated to $S_i$. This scheme achieves the following region.

\begin{theorem}\label{thm:UncodedGaussian}
The uncoded scheme, which forwards $S_i$ at each time step together with some i.i.d. Gaussian noise as given in \eqref{eqn:uncoded}, achieves the set of $\left( R_a, R_l\right)$ pairs denoted by the region $\Rc^{\textrm{uncoded}}$, where
\begin{eqnarray*}
\Rc^{\textrm{uncoded}} = \textrm{Convex Hull} \left\{
\bigcup\limits_{\rho\in[-1,1],\sigma_x^2\in[0,P]}
\left( R_a(\rho,\sigma_x), R_l(\rho,\sigma_x)\right)
\right\} \subset \Cc,
\end{eqnarray*}
with
\begin{eqnarray}
R_a(\rho,\sigma_x) &=& \frac{1}{2} \log\left(
1 + \frac{\sigma_s^2 + 2 \rho \sigma_s \sigma_x + \rho^2 \sigma_x^2}
{\sigma_n^2 + (1-\rho^2) \sigma_x^2}
\right)\label{eqn:rauncoded}\\
R_l(\rho,\sigma_x) &=& \frac{1}{2} \log\left(
1 + \frac{\sigma_s^2 + 2 \rho \sigma_s \sigma_x + \rho^2 \sigma_x^2}
{\sigma_{n_z}^2 + (1-\rho^2) \sigma_x^2}
\right). \label{eqn:rluncoded}
\end{eqnarray}
\end{theorem}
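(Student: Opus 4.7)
The plan is to verify Theorem~\ref{thm:UncodedGaussian} by exploiting the fact that the encoder is a memoryless (symbol-by-symbol) function of the current state and independent Gaussian noise, so the whole joint process $\{(S(i),X(i),V(i),N(i),N_z(i),Y(i),Z(i))\}_{i=1}^n$ is i.i.d.\ across $i$. Under this regime the achievability requirements \eqref{eq:Ra}--\eqref{eq:Rl} and the cost constraint \eqref{eq:Cost} all reduce to single-letter Gaussian computations.

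First I would verify the input cost. Since $S(i)$ and $V(i)$ are independent and zero mean,
\begin{equation*}
\EE\{X(i)^2\}=\rho^{2}\frac{\sigma_x^{2}}{\sigma_s^{2}}\sigma_s^{2}+(1-\rho^{2})\sigma_x^{2}=\sigma_x^{2}\leq P,
\end{equation*}
so \eqref{eq:Cost} holds with $c(x)=x^{2}$. Next, because $(S(i),Y(i))$ and $(S(i),Z(i))$ are each sequences of i.i.d.\ pairs, I would use
\begin{equation*}
\frac{1}{n}I(S^{n};Y^{n})= \frac{1}{n}\bigl[h(Y^{n})-h(Y^{n}|S^{n})\bigr]=h(Y(1))-h(Y(1)|S(1))=I(S;Y),
\end{equation*}
and analogously $\frac{1}{n}I(S^{n};Z^{n})=I(S;Z)$, so both \eqref{eq:Ra} and \eqref{eq:Rl} are met with equality (and $\epsilon=0$) at the single-letter values.

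It remains to evaluate these two single-letter quantities. Substituting the encoder gives $Y=\bigl(1+\rho\tfrac{\sigma_x}{\sigma_s}\bigr)S+\sqrt{1-\rho^{2}}\,\sigma_x V+N$, so $Y$ is Gaussian with variance
\begin{equation*}
\sigma_Y^{2}=\sigma_s^{2}+2\rho\sigma_s\sigma_x+\rho^{2}\sigma_x^{2}+(1-\rho^{2})\sigma_x^{2}+\sigma_n^{2},
\end{equation*}
while $Y|S$ is Gaussian with variance $(1-\rho^{2})\sigma_x^{2}+\sigma_n^{2}$. Plugging these into $I(S;Y)=h(Y)-h(Y|S)=\tfrac{1}{2}\log(\sigma_Y^{2}/\mathrm{Var}(Y|S))$ reproduces the formula for $R_a(\rho,\sigma_x)$ stated in the theorem; the identical computation with $\sigma_{n_z}^{2}$ in place of $\sigma_n^{2}$ yields $R_l(\rho,\sigma_x)$.

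Finally, the convex hull is obtained by standard time-sharing: if $(\rho_1,\sigma_{x,1})$ and $(\rho_2,\sigma_{x,2})$ each achieve the corresponding $(R_a,R_l)$ pairs (and satisfy the cost constraint on average), then using $(\rho_1,\sigma_{x,1})$ on a fraction $\lambda$ of the block and $(\rho_2,\sigma_{x,2})$ on the remaining fraction, with i.i.d.\ encoding within each sub-block, achieves the convex combination of the rate pairs while keeping the average power below $P$. There is no substantive obstacle here; the only minor care is in justifying the identity $\frac{1}{n}I(S^{n};Y^{n})=I(S;Y)$, which follows because the memoryless encoding together with i.i.d.\ $S$ and i.i.d.\ noises makes the joint sequence i.i.d., so both $h(Y^{n})$ and $h(Y^{n}|S^{n})$ additively decompose.
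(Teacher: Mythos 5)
Your proposal is correct and follows the same route as the paper: the paper's (very terse) proof is precisely that the uncoded, memoryless map \eqref{eqn:uncoded} makes $(S(i),Y(i),Z(i))$ i.i.d., so the achievable pair is the single-letter evaluation $R_a=I(S;Y)$, $R_l=I(S;Z)$, which yields exactly the stated Gaussian expressions. Your added details — the power check $\EE\{X^2\}=\sigma_x^2\leq P$, the additive decomposition justifying $\frac{1}{n}I(S^n;Y^n)=I(S;Y)$, and time-sharing for the convex hull — are the standard steps the paper leaves implicit, and they are all sound.
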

The expressions above are obtained by evaluating $R_a=I(S;Y)$ and $R_l=I(S;Z)$ on account of uncoded transmission in (\ref{eqn:uncoded}).

\textit{Examples:}
\begin{itemize}
\item If $P\geq \sigma_s^2$, one can set $X=-S$ and achieve the pair
\begin{equation*}
(R_a=0,R_l=0).
\end{equation*}
\item Another trivial point is obtained by setting $X=0$, which achieves
\begin{equation*}
\left(R_a=\frac{1}{2}\log\left(1+\frac{\sigma_s^2}{\sigma_n^2}\right),
R_l=\frac{1}{2}\log\left(1+\frac{\sigma_s^2}{\sigma_{n_z}^2}\right)\right).
\end{equation*}
\end{itemize}


\subsection{Outer-bounds on $\Cc$}

\begin{corollary}\label{thm:rarlbound}
Let $\rho$ denote the correlation coefficient between $X$ and $S$. The set of all achievable rate pairs $(R_a,R_l)$, satisfy

\begin{eqnarray*}
R_a &\leq& \frac{1}{2}\log\left(1+\frac{\sigma_s^2+\sigma_x^2+2\rho\sigma_s\sigma_x}{\sigma_n^2}\right)\label{eqn:rabound}\\
R_l &\geq& \frac{1}{2}\log\left(1+\frac{\sigma_s^2+\rho^2\sigma_x^2+2\rho\sigma_s\sigma_x}{\sigma_{n_z}^2+\sigma_x^2(1-\rho^2)}\right)
\end{eqnarray*}
for $-1\leq\rho\leq 1$ and $\sigma_x^2\leq P$.
\end{corollary}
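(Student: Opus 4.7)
The plan is to apply Proposition~\ref{thm:Rout1} to reduce to the single-letter outer bounds $R_a\leq I(X,S;Y)$ and $R_l\geq I(S;Z,U)\geq I(S;Z)$, where the last inequality simply drops the auxiliary $U$ in the leakage bound, and then to specialize both bounds to the Gaussian channel using Gaussian maximum-entropy and linear-MMSE arguments. The cost constraint $\EE[X^2]\leq P$ inherited from Proposition~\ref{thm:Rout1} yields $\sigma_x^2\leq P$ after taking $\EE[X]=0$ without loss of generality, since a deterministic shift does not change any of the mutual informations involved.

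For the amplification bound I would expand $I(X,S;Y)=h(Y)-h(Y|X,S)=h(Y)-\frac{1}{2}\log(2\pi e\sigma_n^2)$, using that $Y=X+S+N$ with $N\sim\Nc(0,\sigma_n^2)$ independent of $(X,S)$. Gaussian maximum entropy gives $h(Y)\leq\frac{1}{2}\log(2\pi e\,\mathrm{Var}(Y))$, and by the definition of the correlation coefficient $\rho$ we have $\mathrm{Var}(Y)=\sigma_s^2+\sigma_x^2+2\rho\sigma_s\sigma_x+\sigma_n^2$, which collapses directly to the stated upper bound on $R_a$.

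For the leakage bound I would write $I(S;Z)=h(S)-h(S|Z)$ with $h(S)=\frac{1}{2}\log(2\pi e\sigma_s^2)$, and upper-bound $h(S|Z)$ in two steps. First, pointwise Gaussian maximum entropy followed by Jensen's inequality applied to the concave $\log$ yields $h(S|Z)\leq \frac{1}{2}\log\bigl(2\pi e\,\mathrm{mmse}(S|Z)\bigr)$, where $\mathrm{mmse}(S|Z)=\EE\bigl[(S-\EE[S|Z])^2\bigr]$. Second, since the linear MMSE estimator is a particular (generally suboptimal) estimator, $\mathrm{mmse}(S|Z)\leq\mathrm{lmmse}(S|Z)$. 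Combining these gives $I(S;Z)\geq \frac{1}{2}\log\bigl(\sigma_s^2/\mathrm{lmmse}(S|Z)\bigr)$. The linear MMSE depends only on second-order statistics:
\[
\mathrm{lmmse}(S|Z)=\sigma_s^2-\frac{(\EE[SZ])^2}{\mathrm{Var}(Z)}=\sigma_s^2-\frac{(\sigma_s^2+\rho\sigma_s\sigma_x)^2}{\sigma_s^2+\sigma_x^2+2\rho\sigma_s\sigma_x+\sigma_{n_z}^2},
\]
and a short algebraic simplification recasts this as $\sigma_s^2\bigl((1-\rho^2)\sigma_x^2+\sigma_{n_z}^2\bigr)/\mathrm{Var}(Z)$, from which the claimed lower bound on $R_l$ falls out.

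The main conceptual step is the linear-MMSE upper bound on $h(S|Z)$: it is what makes the lower bound on $I(S;Z)$ depend only on $(\rho,\sigma_x^2)$ rather than on the full, possibly non-Gaussian, conditional $p(x|s)$, and it is the natural dual of the Gaussian maximum-entropy upper bound used for $R_a$. Beyond this, only straightforward algebra remains; the one place that needs a little bookkeeping is verifying that the lmmse expression collapses to exactly the ratio displayed in the corollary.
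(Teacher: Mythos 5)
Your proposal is correct and follows essentially the same route as the paper: the single-letter outer bound $R_a\leq I(X,S;Y)$ with Gaussian maximum entropy for $h(Y)$, and $R_l\geq I(S;Z)$ lower-bounded through the linear-MMSE estimate of $S$ from $Z$, reducing everything to the same second-order computation in $(\rho,\sigma_x)$. The only cosmetic differences are that you invoke Proposition~\ref{thm:Rout1} rather than Proposition~\ref{thm:Rout2} (either suffices here) and reach $h(S|Z)\leq\frac{1}{2}\log\bigl(2\pi e\,\mathrm{lmmse}(S|Z)\bigr)$ via the mmse-then-lmmse chain instead of the paper's direct step of subtracting the linear estimate and dropping the conditioning.
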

\begin{IEEEproof}
Using Proposition~\ref{thm:Rout2}, we have
\begin{equation}
R_a \leq I(X,S;Y) = h(Y) - h(Y|X,S) = h(Y) - h(N) \leq \frac{1}{2} \log \left( 1 + \frac{\sigma_s^2 + 2 \rho \sigma_s \sigma_x + \sigma_x^2}{\sigma_n^2}\right).
\end{equation}
Using Proposition~\ref{thm:Rout2}, the linear estimate $\hat{S}(Z)=\frac{\mathbb{E}[SZ]}{\mathbb{E}[Z^2]}Z$, and the fact the conditioning does not increase entropy, we get
\begin{equation*}
R_l \geq I(S;Z,U)\geq I(S;Z) = h(S) - h(S|Z) = h(S) - h(S-\hat{S}(Z)|Z) \geq h(S) - h(S-\hat{S}(Z)).
\end{equation*}
Since the entropy maximizing distribution for a given second moment is a Gaussian, we have 
\begin{equation*}
h(S-\hat{S}(Z)) \leq \frac{1}{2} \log 2\pi e\left(\frac{\sigma_s^2}{1 + \frac{\sigma_s^2 + 2 \rho \sigma_s \sigma_x + \rho^2\sigma_x^2}{\sigma_{n_z}^2+\sigma_x^2(1-\rho^2)}}\right),
\end{equation*}
leading to 
\begin{equation*}
R_l\geq \frac{1}{2} \log \left( 1 + \frac{\sigma_s^2 + 2 \rho \sigma_s \sigma_x + \rho^2\sigma_x^2}{\sigma_{n_z}^2+\sigma_x^2(1-\rho^2)}\right).
\end{equation*}
\end{IEEEproof}

\begin{corollary}
Let $\rho$ denote the correlation coefficient between $X$ and $S$. If $\sigma_n^2\leq \sigma_{n_z}^2$, then the set of all achievable rate pairs $(R_a,R_l)$ satisfy
\begin{equation}
R_a - R_l \leq
\frac{1}{2}\log\left(
1+ \frac{\sigma_s^2 + 2 \rho \sigma_s \sigma_x + \sigma_x^2}{\sigma_n^2}
\right)
-
\frac{1}{2}\log\left(
1+ \frac{\sigma_s^2 + 2  \rho \sigma_s \sigma_x + \sigma_x^2}{\sigma_{n_z}^2}
\right)\label{eqn:diff},
\end{equation}
for $-1\leq\rho\leq 1$ and $\sigma_x^2\leq P$.
\end{corollary}
\begin{IEEEproof}
By Proposition~\ref{thm:Rout2}, we have
\begin{equation*}
R_a-R_l \leq I(X,S;Y|Z).
\end{equation*}
Without loss of generality, we consider $\tilde{N}=N+N'$ with $\sigma_{n_z}^2=\sigma_{n}^2+\sigma_{n'}^2$ where $N'$ is independent of $N$. Noting that,
\begin{equation*}
I(X,S;Y|Z) = h(Y|Z) - h(Y|X,S,Z) = h(Y|Z) - h(N|\tilde{N}),
\end{equation*}
we upper bound $h(Y|Z)$ using the following. Consider two zero-mean correlated random variables $A$ and $B$.
\begin{eqnarray*}
h(A|B) &\overset{(a)}{=}& h(A-\hat{A}(B)|B)\nonumber\\
&\leq& h(A-\hat{A}(B))\\
&\overset{(b)}{\leq}& \frac{1}{2}\log(2\pi e \sigma_e^2),\nonumber
\end{eqnarray*}
where in (a) we used $\hat{A}(B)$ as the estimate of $A$ given $B$, and (b) follows by defining the estimation error variance $\sigma_e^2\triangleq E\left[(A-\hat{A}(B))^2\right]$ and the fact that Gaussian distribution maximizes entropy given the variance. We then upper bound the optimal estimator error variance by the linear MMSE variance. Therefore,
\begin{equation*}
h(A|B) \leq \frac{1}{2}\log\left(2\pi e
\left(
\textrm{var}(A)-\frac{E\left[(AB)^2\right]}{\textrm{var}(B)}
\right)
\right).
\end{equation*}
Using the above, we obtain
\begin{eqnarray}
R_a-R_l &\leq& \frac{1}{2}\log\left(2\pi e
\left(
\sigma_s^2 + 2  \rho\sigma_s \sigma_x + \sigma_x^2 + \sigma_n^2
-\frac{(\sigma_s^2 + 2  \rho\sigma_s \sigma_x + \sigma_x^2 + \sigma_{n}^2)^2}
{\sigma_s^2 + 2  \rho\sigma_s \sigma_x + \sigma_x^2 + \sigma_{n}^2 + \sigma_{n'}^2}
\right)
\right)\nonumber\\
&&{}-
\frac{1}{2}\log\left(2\pi e
\left(
\sigma_n^2-\frac{(\sigma_n^2)^2}{\sigma_{n}^2 + \sigma_{n'}^2}
\right)
\right)\nonumber\\
&=&
\frac{1}{2}\log\left(
1+ \frac{\sigma_s^2 + 2  \rho\sigma_s \sigma_x + \sigma_x^2}{\sigma_n^2}
\right)
-
\frac{1}{2}\log\left(
1+ \frac{\sigma_s^2 + 2  \rho\sigma_s \sigma_x + \sigma_x^2}{\sigma_n^2+\sigma_{n'}^2}
\right).
\end{eqnarray}
This completes the proof.
\end{IEEEproof}


\subsection{Comparison of inner and outer bounds for the degraded Gaussian channel}
We now compare the uncoded scheme and the outer bound presented above. In particular, we show that the uncoded transmission scheme achieves certain corner points of the amplification-masking region and that the gap between the inner and outer bounds on the region is within $1/2$ bits for a wide set of channel parameters. We also show that the uncoded scheme achieves the optimal difference $R_a-R_l$.


\subsubsection{Characterization of the gap between achievable and converse regions}

We show that given any point $(R_a,R_l)$ in the converse region corresponding to a given $(\rho,\sigma_x)$, uncoded transmission achieves within $1/2$ bits of the converse region under certain conditions on channel parameters. In particular, for any given $R_a$, uncoded transmission achieves that $R_a$ and within $1/2$ bits of the bound on $R_l$ if $\frac{P}{\sigma_{n_z}^2}\leq 1$. Similarly for any given $R_l$, uncoded transmission achieves the given $R_l$ and within $1/2$ bits of the bound on $R_a$ if $\frac{P}{\sigma_n^2}\leq 1$. We prove these as follows. Using Corollary~\ref{thm:rarlbound}, any point in the outer bound region is described as
\begin{align}
R_a &= \frac{1}{2}\log\left(1+\frac{\sigma_s^2+\sigma_x^2+2\rho\sigma_s\sigma_x}{\sigma_n^2}\right)\nonumber\\
R_l &= \frac{1}{2}\log\left(1+\frac{\sigma_s^2+\rho^2\sigma_x^2+2\rho\sigma_s\sigma_x}{\sigma_{n_z}^2+\sigma_x^2(1-\rho^2)}\right)
\end{align}
for $-1\leq\rho\leq 1$ and $\sigma_x^2\leq P$. Now let us show that uncoded transmission achieves any $R_l$ in the region above and the gap from $R_a$ as above is within $1/2$ bits. Let the uncoded scheme be designed such that $X_i = \frac{\sigma_x}{\sigma_s}\rho S_i+T_i$, where $T_i\sim\mathcal{N}(0,\sigma_x^2(1-\rho^2))$ and independent of $S_i$. Now, by \eqref{eqn:rluncoded} and \eqref{eqn:rauncoded} this input achieves a leakage,  $I(S;Z)=\frac{1}{2}\log\left(1+\frac{\sigma_s^2+\rho^2\sigma_x^2+2\rho\sigma_s\sigma_x}{\sigma_{n_z}^2+\sigma_x^2(1-\rho^2)}\right)$ and $R_a$ given by $I(S;Y)=\frac{1}{2}\log\left(1+\frac{\sigma_s^2+\rho^2\sigma_x^2+2\rho\sigma_s\sigma_x}{\sigma_{n}^2+\sigma_x^2(1-\rho^2)}\right)$, which implies that the gap is given by
\begin{equation}\label{eqn:ragap}
I(X,S;Y) - I(S;Y) = I(X;Y|S) = \frac{1}{2}\log\left(1+\frac{\sigma_x^2(1-\rho^2)}{\sigma_{n}^2}\right)\leq \frac{1}{2},
\end{equation}
for $\frac{P}{\sigma_n^2}\leq 1$. Now, in order to prove the other claim, that uncoded achieves any given $R_a$ and the gap with $R_l$ is within $1/2$ bits, we proceed as follows. Given $R_a=\frac{1}{2}\log\left(1+\frac{\sigma_s^2+\sigma_x^2+2\rho\sigma_s\sigma_x}{\sigma_n^2}\right)$, we achieve this by choosing an uncoded scheme such that $X_i = \frac{\sigma_{x'}}{\sigma_s}S_i$ if $\rho\geq 0$ or $X_i = -\frac{\sigma_{x'}}{\sigma_s}S_i$ if $\rho< 0$. We choose $\sigma_{x'}$ such that
\begin{eqnarray*}
\sigma_{x'}^2+2\frac{\rho}{|\rho|}\sigma_s\sigma_{x'} &=& \sigma_x^2+2\rho\sigma_s\sigma_x, \quad \textrm{ if } \rho\neq 0,\\
\sigma_{x'}^2+2\sigma_s\sigma_{x'} &=& \sigma_x^2, \quad \textrm{ if } \rho= 0,
\end{eqnarray*}
where $0\leq\sigma_{x'}\leq \sqrt{P}$. By the intermediate value theorem for continuous functions, it is clear that there exists $\sigma_{x'}\leq \sqrt{P}$ such that the conditions above are satisfied. Further, the uncoded scheme achieves the desired $R_a$. For $\rho\neq 0$, the scheme achieves an $R_l$ given by $\frac{1}{2}\log\left(1+\frac{\sigma_s^2+\sigma_{x'}^2+2\frac{\rho}{|\rho|}\sigma_s\sigma_{x'}}{\sigma_{n_z}^2}\right)$ leading to a gap

\begin{align}
&\frac{1}{2}\log \left(1+\frac{\sigma_s^2+\sigma_{x'}^2+2\frac{\rho}{|\rho|}\sigma_s\sigma_{x'}}{\sigma_{n_z}^2}\right) - \frac{1}{2}\log\left(1+\frac{\sigma_s^2+\rho^2\sigma_x^2+2\rho\sigma_s\sigma_x}{\sigma_{n_z}^2+\sigma_x^2(1-\rho^2)}\right)\nonumber\\
&\quad\:= \frac{1}{2}\log \left(1+\frac{\sigma_s^2+\sigma_x^2+2\rho\sigma_s\sigma_x}{\sigma_{n_z}^2}\right) -\frac{1}{2}\log\left(1+\frac{\sigma_s^2+\rho^2\sigma_x^2+2\rho\sigma_s\sigma_x}{\sigma_{n_z}^2+\sigma_x^2(1-\rho^2)}\right)\nonumber\\
&\quad\:= \frac{1}{2}\log \left(\frac{\sigma_{n_z}^2+\sigma_s^2+\sigma_x^2+2\rho\sigma_s\sigma_x}{\sigma_{n_z}^2}\right) -\frac{1}{2}\log\left(\frac{\sigma_{n_z}^2+\sigma_s^2+\sigma_x^2+2\rho\sigma_s\sigma_x}{\sigma_{n_z}^2+\sigma_x^2(1-\rho^2)}\right)\label{eqn:rlgap}\nonumber\\
&\quad\:=  \frac{1}{2}\log \left(\frac{\sigma_{n_z}^2+\sigma_x^2(1-\rho^2)}{\sigma_{n_z}^2}\right) = \frac{1}{2}\log \left(1+\frac{\sigma_x^2(1-\rho^2)}{\sigma_{n_z}^2}\right) \leq \frac{1}{2}\log 2 = \frac{1}{2},
\end{align}
when $\frac{P}{\sigma_{n_z}^2}\leq 1$. Following the same steps, $\rho=0$ case can be shown as well, implying the characterization of the trade-off region within $1/2$ bits for $\frac{P}{\sigma_{n_z}^2}\leq 1$ and $\frac{P}{\sigma_{n}^2}\leq 1$.


\subsubsection{Differential amplification capacity}
Note that the uncoded transmission achieves the maximum $R_a-R_l$. The upper bound on $R_a-R_l$ in \eqref{eqn:diff} is maximized for $\sigma_x^2=P$ and $\rho=1$. This maximum difference between $R_a$ and $R_l$ is achieved by uncoded transmission corresponding to $X=\frac{\sqrt{P}}{\sigma_s}S$ in Theorem~\ref{thm:UncodedGaussian}, and is given by
\begin{equation*}
C_d =
\frac{1}{2}\log\left(
1+ \frac{(\sigma_s+\sqrt{P})^2}{\sigma_n^2}
\right)
-
\frac{1}{2}\log\left(
1+ \frac{(\sigma_s+\sqrt{P})^2}{\sigma_{n_z}^2}
\right).
\end{equation*}


\subsubsection{Corner points of the trade-off region}
Consider the corner points of the amplification-masking region. Inspecting (\ref{eqn:rabound}), we observe that the point in the outer bound region corresponding to maximum amplification is given by $\rho=1$. Clearly, from (\ref{eqn:ragap}) and (\ref{eqn:rlgap}), we see that the gap is zero for $\rho=1$. Similarly, consider the point corresponding to minimum leakage $R_l$ in the \emph{weak} and \emph{moderate} interference regimes as in \cite{Merhav:Information07}. These points correspond to $\rho=-1$ and we have $I(X,S;Y) = I(S;Y)$ and $I(X,S;Z) = I(S;Z)$, leading to the gap being zero. This is also verified by setting $\rho=-1$ in (\ref{eqn:ragap}) and (\ref{eqn:rlgap}).


\subsubsection{Numerical results}

We compare the uncoded region with an outer-bound region (given in Proposition \ref{thm:rarlbound})  in Fig.~\ref{fig:SimGauss}. The first case corresponds to a degraded scenario, where the gap between the regions is fairly small as expected from the analysis given above. However, for the reversely degraded scenario with larger power constraint $P$ compared to the state power $\sigma_s^2$, the gap is larger. In Fig.~\ref{fig:SimGauss2}, we plot the differential amplification capacity for a degraded channel ($\sigma_n^2=1$, $\sigma_{n_z}^2=5$) for a range of power constraints $P$ and different values of $\sigma_s^2$. Note that the differential amplification capacity saturates in the high SNR regime, and the effect of encoder in increasing $C_d$ is decreasing as the power of the additive state increases.

\begin{figure}[t]
    \centering
    \includegraphics[width=1\columnwidth]{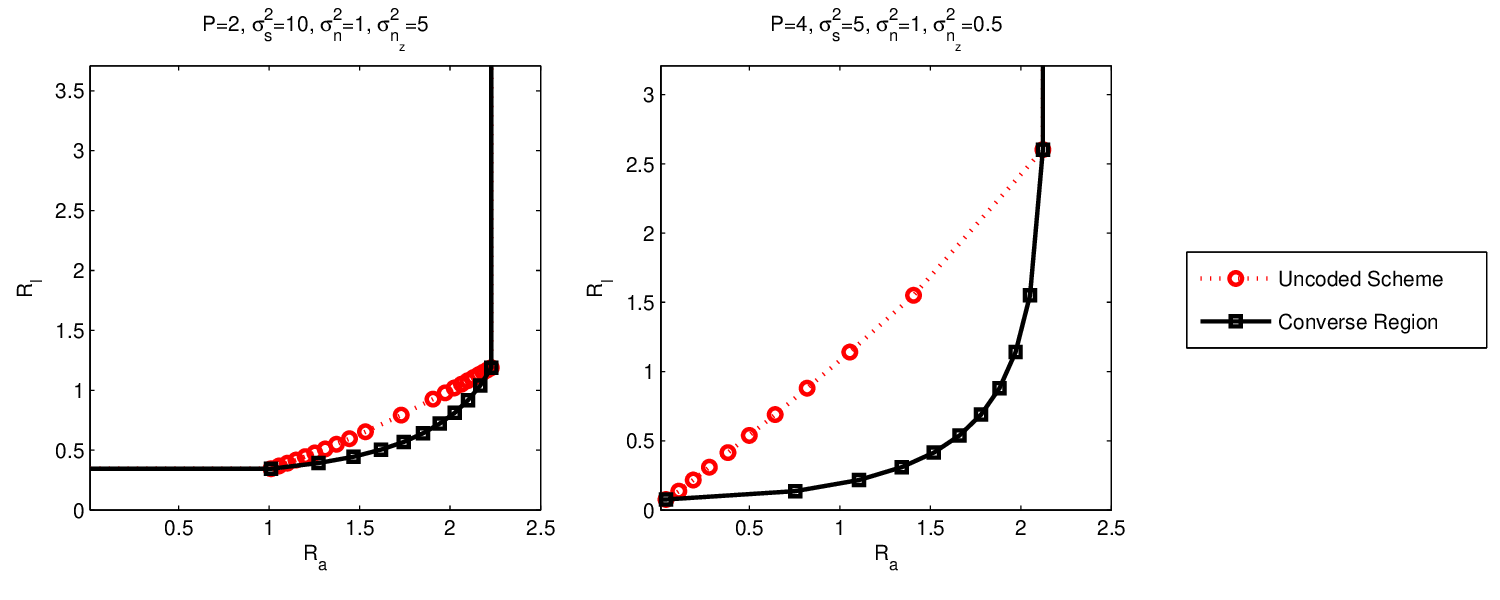}
    \caption{Simulation results for the Gaussian scenario.
    }
    \label{fig:SimGauss}
\end{figure}

\begin{figure}[t]
    \centering
    \includegraphics[width=1\columnwidth]{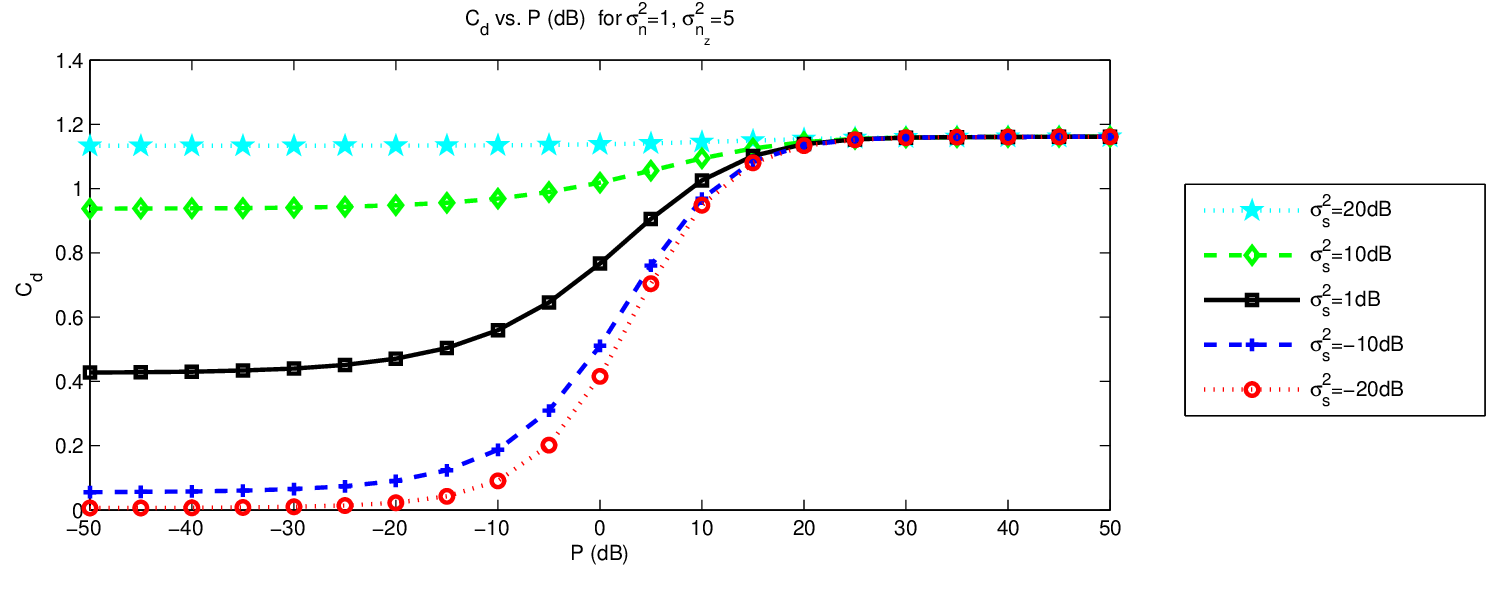}
    \caption{Differential amplification capacity $C_d$ vs. power
    $P$ (dB). The dashed curve with diamond markers correspond to
    the same scenario given in the degraded setting of Fig.~\ref{fig:SimGauss}
    ($\sigma_s^2=10$) for different power levels.
    }
    \label{fig:SimGauss2}
\end{figure}


\section{Conclusion}
\label{sec:Conclusion}

We study the problem of state amplification under the masking constraints, where the encoder (with the knowledge of non-causal state $S^n$) facilitates the amplification rate ($\frac{1}{n}I(S^n;Y^n)$) at Bob, which observes $Y^n$, while minimizing the leakage rate ($\frac{1}{n}I(S^n;Z^n)$) as much as possible at Eve, which observes $Z^n$. Our coding schemes are based on transmission of state dependent messages over the state dependent channel to Bob. The achievable region corresponding to this refinement strategy is derived by calculating bounds on amplification and masking rates. We also show that for the input distributions enabling Bob to be a ``stronger" receiver than Eve, the refinement information can be sent securely over the channel. This \emph{secure refinement} approach is shown to lead to non-trivial achievable regions. We also provided outer bounds, using which we showed that the scheme without secure refinement achieves the optimal $R_a-R_l$ over the region in the reversely degraded DMCs, the degraded binary channels, and Gaussian channels. For the degraded Gaussian model, we also characterized the optimal corner points, and the gap between the outer bound and  achievable regions.

Several interesting problems can be considered as future directions. First, the channel cost may be introduced for the DMC model as well, and the cost may have some dependence on the state sequence or vary according to a stochastic model. Second, causal channel state knowledge can be considered. Third, in addition to the task of state amplification and masking, transmission of messages to the receivers can be considered. Towards this end, signaling techniques for (secure) broadcast channel models can be utilized. Another extension direction is the coded state sequence setting~\cite{Heegard:capacity83}, which is a scenario that is more relevant to the broadcast and cognitive radio systems, where the coded signal (that carries a message from a codebook) corresponds to the channel state sequence that is non-causally known at the encoder. Finally, a source coding extension of the current model can be studied. In such a problem setup, the trade-off between the distortions achieved at both users can be analyzed. A relevant work for such an extension is \cite{Courtade:Information12}, where a source coding setup is considered with two sources (one has to be amplified and the other has to be masked) with a single receiver. We remark that the amplification and leakage rates analyzed in this paper can be utilized to provide a lower bound on the distortion that can be achieved at Bob and Eve, respectively, by evaluating the corresponding distortion-rate functions.


\section*{acknowledgement}
The authors are thankful to Yanling Chen of Ruhr-University Bochum, Yeow-Khiang Chia of Institute for Infocomm Research, Vincent Y. F. Tan of National University of Singapore, and anonymous reviewers for their valuable feedback.


\appendices


\section{$\Pr\{\Ec_3^c \cap \Ec_4\}\to 0$ as $n\to\infty$}
\label{App:CoveringIndexDecoding}

We use the arguments given in~\cite{Lim:Lossy10} in order to show this. Similar to the joint source-channel coding scenario studied in~\cite{Lim:Lossy10}, we have a codeword $U^n$ with a covering index $K$, which makes the codebook and covering index $K$ dependent to each other. In other words, $S^n$ sequence realization together with the codebook determines the index $K$, from which $(S^n,U^n)$ generates $Y^n$ through an i.i.d. generation process. Under such a scenario, as reported in~\cite{Lim:Lossy10}, decoding the index $K$ with $Y^n$ at receiver is successful if the number of $K$ indices is less than $2^{nI(U;Y)}$. We now provide this analysis here for completeness. Using the steps given in~\cite{Lim:Lossy10} for the scenario here, we have the following: For a given $M$, we have $2^{nR_k}$ number of $U^n(M,k)$ codewords ($k\in[1:2^{nR_k}]$), and the bound below.

\begin{align*}
\Pr\{\Ec_3^c \cap\Ec_4\} 
&=\Pr\{(U^n(M,k),Y^n)\in\Tc_\epsilon^{(n)} \textrm{ for some } k\neq K\}\\
&\stackrel{(a)}{\leq} \sum\limits_{k=1}^{2^{nR_k}} \Pr\{ (U^n(M,k),Y^n) \in\Tc_\epsilon^{(n)}, K\neq k \}\\
&= \sum\limits_{k=1}^{2^{nR_k}} \sum\limits_{s^n} p(s^n) \Pr\{ (U^n(M,k),Y^n) \in\Tc_\epsilon^{(n)}, K\neq k | S^n=s^n \}\\
&\stackrel{(b)}{=} 2^{nR_k} \sum\limits_{s^n} p(s^n) \Pr\{ (U^n(M,1),Y^n) \in\Tc_\epsilon^{(n)}, K\neq 1 | S^n=s^n \}\\
&\leq 2^{nR_k} \sum\limits_{s^n} p(s^n) \Pr\{ (U^n(M,1),Y^n) \in\Tc_\epsilon^{(n)}| K\neq 1 , S^n=s^n \}\\
&= 2^{nR_k} \sum\limits_{s^n} p(s^n) \sum\limits_{(u^n,y^n)\in\Tc_\epsilon^{(n)}}\Pr\{ U^n(M,1)=u^n,Y^n=y^n| K\neq 1 , S^n=s^n \}\\
&\stackrel{(c)}{=} 2^{nR_k} \sum\limits_{s^n} p(s^n) \sum\limits_{(u^n,y^n)\in\Tc_\epsilon^{(n)}} \sum\limits_{\bar{\Cc}} \Pr\{ U^n(M,1)=u^n,Y^n=y^n| K\neq 1 , S^n=s^n, \bar{\Cc}={\scriptstyle \bar{\Cc}}\}\\
& \quad \quad \times  \Pr\{\bar{\Cc}={\scriptstyle \bar{\Cc}} |K\neq 1 , S^n=s^n\}\\
&\stackrel{(d)}{=} 2^{nR_k} \sum\limits_{s^n} p(s^n) \sum\limits_{(u^n,y^n)\in\Tc_\epsilon^{(n)}} \sum\limits_{\bar{\Cc}} \Pr\{ U^n(M,1)=u^n | K\neq 1 , S^n=s^n, \bar{\Cc}={\scriptstyle \bar{\Cc}}\}\\
& \quad \quad \times  \Pr\{ Y^n=y^n | K\neq 1 , S^n=s^n, \bar{\Cc}={\scriptstyle \bar{\Cc}}\}\Pr\{\bar{\Cc}={\scriptstyle \bar{\Cc}} |K\neq 1 , S^n=s^n\}\\
&\stackrel{(e)}{\leq} 2^{nR_k} \sum\limits_{s^n} p(s^n) \sum\limits_{(u^n,y^n)\in\Tc_\epsilon^{(n)}} \sum\limits_{\bar{\Cc}} 2\Pr\{ U^n(M,1)=u^n \}\\
& \quad \quad \times  \Pr\{ Y^n=y^n | K\neq 1 , S^n=s^n, \bar{\Cc}={\scriptstyle \bar{\Cc}}\}\Pr\{\bar{\Cc}={\scriptstyle \bar{\Cc}} |K\neq 1 , S^n=s^n\}\\
&= 2^{nR_k} \sum\limits_{s^n} p(s^n) \sum\limits_{(u^n,y^n)\in\Tc_\epsilon^{(n)}} 2\Pr\{ U^n(M,1)=u^n \} \Pr\{ Y^n=y^n | K\neq 1 , S^n=s^n\}\\
&\stackrel{(f)}{\leq} 2^{nR_k} \sum\limits_{s^n} p(s^n) \sum\limits_{(u^n,y^n)\in\Tc_\epsilon^{(n)}} 4\Pr\{ U^n(M,1)=u^n \} \Pr\{ Y^n=y^n | S^n=s^n\}\\
&\stackrel{(g)}{=} 2^{nR_k+2} \sum\limits_{(u^n,y^n)\in\Tc_\epsilon^{(n)}} \prod\limits_{i=1}^n p_U(u_i) \Pr\{ Y^n=y^n\}\\
&\stackrel{(h)}{\leq} 2^{n(R_k-I(U;Y)+\delta)}\\
\end{align*}
where (a) is due to the union of events bound, and (b) follows by the symmetry of the codebook generation and coding, (c) is by defining $\bar{\Cc}=\{U^n(M,k), k\neq 1\}$, (d) is due to the fact that given $K\neq 1$, $U^n(M,1)\to (\bar{\Cc},S^n) \to Y^n$ forms a Markov chain, (e) is due to Lemma~\ref{thm:Lemma1MAC} given at the end of this section, (f) is due to Lemma~\ref{thm:Lemma2MAC} given at the end of this section, (g) is due to having i.i.d. generation for $U^n$, i.e., $\Pr\{ U^n(M,1)=u^n \}=\prod\limits_{i=1}^n p_U(u_i)$, (h) is due to joint typicality lemma~\cite{ElGamal:Network11}.

From the last expression, we obtain that $\Pr\{\Ec_3^c \cap \Ec_4\}\to 0$ as $n\to\infty$, if $R_k<I(U;Y)-\delta$. This implies the existence of sequence of codes implying the desired result as we set $R_k=I(U;S)+\delta<I(U;Y)-\delta$.

\begin{lemma}[Lemma 1 in~\cite{Lim:Lossy10}]\label{thm:Lemma1MAC}
For sufficiently large $n$, $\Pr\{ U^n(M,1)=u^n | K\neq 1 , S^n=s^n, \bar{\Cc}={\scriptstyle \bar{\Cc}}\}\leq 2\Pr\{ U^n(M,1)=u^n \}$
\end{lemma}

\begin{IEEEproof}
We have
\begin{align*}
\Pr\{ U^n(M,1)=u^n | K\neq 1 , S^n=s^n, \bar{\Cc}={\scriptstyle \bar{\Cc}}\}
&=\Pr\{ U^n(M,1)=u^n | S^n=s^n, \bar{\Cc}={\scriptstyle \bar{\Cc}}\}\\
&\quad \quad \times \frac{
\Pr\{ K\neq 1| U^n(M,1)=u^n  , S^n=s^n, \bar{\Cc}={\scriptstyle \bar{\Cc}}\}
}{
\Pr\{K\neq 1 | S^n=s^n, \bar{\Cc}={\scriptstyle \bar{\Cc}}\}
}\\
&\stackrel{(a)}{\leq}
\frac{
\Pr\{ U^n(M,1)=u^n \}
}{
\Pr\{K\neq 1 | S^n=s^n, \bar{\Cc}={\scriptstyle \bar{\Cc}}\}
}\\
&\stackrel{(b)}{\leq}
2\Pr\{ U^n(M,1)=u^n \}
\end{align*}
where (a) is due to independence of $U^n(M,1)$ and $(S^n,\bar{\Cc})$ together with bounding $\Pr\{ K\neq 1| U^n(M,1)=u^n  , S^n=s^n, \bar{\Cc}={\scriptstyle \bar{\Cc}}\}\leq 1$, and (b) follows from $\Pr\{K\neq 1 | S^n=s^n, \bar{\Cc}={\scriptstyle \bar{\Cc}}\}\geq \frac{1}{2}$, as shown below.

Consider $t=t({\scriptstyle \bar{\Cc}},s^n)=|\{u^n(M,k)\in{\scriptstyle \bar{\Cc}}: (u^n(M,k),s^n)\in\Tc_{\epsilon'}^{(n)}\}|$. Then, if $t\geq 1$, by the symmetry of the codebook generation and coding,
\begin{align*}
\Pr\{K= 1 | S^n=s^n, \bar{\Cc}={\scriptstyle \bar{\Cc}}\}
=\frac{\Pr\{(U^n(M,1),s^n)\in\Tc_{\epsilon'}^{(n)}\}}{t+1}
\leq \frac{1}{t+1}
\leq \frac{1}{2},
\end{align*}
where we upper bound the probability with $1$ and used $t\geq 1$. On the other hand, if $t=0$, for sufficiently large $n$, and due to the symmetry of the codebook generation and coding, we have
\begin{align*}
\Pr\{K= 1 | S^n=s^n, \bar{\Cc}={\scriptstyle \bar{\Cc}}\}
&\leq\Pr\{(U^n(M,1),s^n)\in\Tc_{\epsilon'}^{(n)}\} + \frac{\Pr\{(U^n(M,1),s^n)\notin\Tc_{\epsilon'}^{(n)}\}}{2^{nR_k}}\\
&\leq \Pr\{(U^n(M,1),s^n)\in\Tc_{\epsilon'}^{(n)}\} + \frac{1}{2^{nR_k}}\\
&\leq 2^{-n(I(U;S)-\delta(\epsilon'))} + \frac{1}{2^{nR_k}}\\
&\leq \frac{1}{2},
\end{align*}
where bound the probability with $1$ and utilized the joint typicality lemma~\cite{ElGamal:Network11}. The last inequality holds in the limit of large $n$ as $R_k>0$ and $I(U;S)>\delta(\epsilon')$, where $\delta(\epsilon')\to 0$ as $\epsilon'\to 0$.
\end{IEEEproof}

\begin{lemma}[Lemma 2 in~\cite{Lim:Lossy10}]\label{thm:Lemma2MAC}
For sufficiently large $n$, $\Pr\{ Y^n=y^n | K\neq 1 , S^n=s^n\}\leq 2 \Pr\{ Y^n=y^n | S^n=s^n\}$
\end{lemma}

\begin{IEEEproof}
We have
\begin{align*}
\Pr\{ Y^n=y^n | K\neq 1 , S^n=s^n\} 
&= \frac{\Pr\{ Y^n=y^n | S^n=s^n\} \Pr\{K\neq 1|S^n=s^n,Y^n=y^n\}}{\Pr\{K\neq 1|S^n=s^n\}}\\
&\stackrel{(a)}{\leq} \frac{\Pr\{ Y^n=y^n | S^n=s^n\} }{\Pr\{K\neq 1|S^n=s^n\}}\\
&\stackrel{(b)}{\leq} 2\Pr\{ Y^n=y^n | S^n=s^n\},
\end{align*}
where (a) follows as $\Pr\{K\neq 1|S^n=s^n,Y^n=y^n\}\leq 1$, and (b) is due to having $\Pr\{K\neq 1|S^n=s^n\}\geq 1/2$ for sufficiently large $n$ due to the symmetry of the codebook generation and coding.
\end{IEEEproof}


\section{Indicator event conditioning lemma}
\label{App:ConditioningLemma}

\begin{lemma}\label{thm:EventCond}
Consider an indicator random variable $E$, where $E=1$ for $\Ec$, and $E=0$ for $\Ec^c$, for an event $\Ec$. Then, for any $A,B$,
\begin{align*}
H(A|B) &\leq H(A|B,E)+1 \\
I(A;B) &\geq I(A;B|E)-1\\
I(A;B) &\leq I(A;B|E)+1 
\end{align*}
\end{lemma}
\begin{IEEEproof}
\begin{align}
H(A|B) &\stackrel{(a)}{\leq} H(A|B)+H(E|B,A)=H(E|B)+H(A|B,E)\stackrel{(b)}{\leq} H(A|B,E)+1 \label{eq:b1}\\
I(A;B) &\stackrel{(c)}{\geq} H(A|E)-H(A|B) \stackrel{(d)}{\geq} I(A;B|E)-1\nonumber\\
I(A;B) &\stackrel{(c)}{\leq} H(A)-H(A|B,E) \stackrel{(e)}{\leq} I(A;B|E)+1,\nonumber
\end{align}
where (a) is due to $H(E|B,E)\geq 0$, (b) is due to upper bound on the entropy of a binary random variable, (c) follows as conditioning does not increase entropy, (d) follows by \eqref{eq:b1}, and (e) follows by considering $B=\emptyset$ in \eqref{eq:b1} to upper bound $H(A)$.
\end{IEEEproof}


\section{Proof of Lemma~\ref{thm:Leakage}}
\label{App:Leakage}

\begin{IEEEproof}
We first consider $I(U;Z)>I(U;S)$ case, for which the codewords are represented by $U^n(M,T,K)$.
\begin{align*}
I(M;Z^n)
&=H(M)-H(M|Z^n)\\
&=H(M)-H(T,K,Z^n,M)+H(Z^n)+H(T,K|Z^n,M)\\
&=H(Z^n)+H(T,K|Z^n,M)-H(T,K|M)-H(Z^n|M,T,K)\\
&\stackrel{(a)}{\leq} n(H(Z)+\epsilon_1) - H(T|M) - H(U^n|M,T) - H(Z^n|M,T,U^n)\\
&\stackrel{(b)}{=} n(H(Z|U)+I(U;S)+\epsilon_2) - H(U^n,Z^n|M,T)\\
&= n(H(Z|U)+I(U;S)+\epsilon_2) - H(U^n,Z^n|M,T,S^n) - I(S^n;U^n,Z^n|M,T)\\
&\stackrel{(c)}{\leq} n(H(Z|U)+I(U;S)+\epsilon_2) - H(Z^n|M,T,S^n,U^n) - H(S^n|M,T)+H(S^n|U^n,Z^n)\\
&\stackrel{(d)}{\leq} n(I(S;Z,U)+\epsilon_2) - H(S^n|M,T)+H(S^n|U^n,Z^n),
\end{align*}
where (a) follows by $H(Z^n)=\sum\limits_{i=1}^n H(Z_i|Z_{1}^{i-1}) \leq \sum\limits_{i=1}^n H(Z_i) = nH(Z)$, $H(T,K|Z^n,M)\leq n\epsilon_1$ for some $\epsilon_1\to 0$ as $n\to\infty$ (this is decoding of $(T,K)$ at eavesdropper having $(Z^n,M)$, following from steps similar to the proof of Theorem~\ref{thm:Rin1} replacing Bob with Eve, as number of $(T,K)$ indices is $2^{n(I(U;Z)-\delta)}$, see also Appendix~\ref{App:CoveringIndexDecoding}), $H(K|M,T)=H(U^n|M,T)$, and having $H(Z^n|M,T,K)=H(Z^n|M,T,U^n)$ as $(M,T,K)$ is a one-to-one function of $(M,T,U^n(M,T,K))$, (b) follows by $H(T|M)=H(T)=n(I(U;Z)-I(U;S)-2\delta)$ as $T$ is independent of $M$ and uniformly random and taking $\epsilon_2=\epsilon_1+2\delta$, (c) follows as $H(U^n|M,T,S^n)\geq 0$ and $H(S^n|M,T,U^n,Z^n)=H(S^n|U^n,Z^n)$ as $(M,T)$ uniquely determined given $U^n(M,T,K)$, (d) is by having $$H(Z^n|M,T,S^n,U^n)=\sum\limits_{i=1}^n H(Z_i|Z_{1}^{i-1},M,T,S^n,U^n)=\sum\limits_{i=1}^n H(Z_i|S_i,U_i)=nH(Z|S,U),$$ which is due to the Markov chain $(Z_{1}^{i-1},M,T,S_1^{i-1},S_{i+1}^n,U_1^{i-1},U_{i+1}^n) \to (S_i,U_i)\to Z_i$ as $(U_i,S_i)$ generates $(X_i,S_i)$ which generates $Z_i$ i.i.d. due to the memoryless channel.

Secondly, consider $I(U;Z)\leq I(U;S)$ case, for which the codewords are represented by $U^n(M,K)$. We consider $K=[K_1,K_2]$ with $K_1=[1:2^{n(I(U;S)-I(U;Z)+2\delta)}]$ and $K_2=[1:2^{n(I(U;Z)-\delta)}]$, which together represent the covering index $K$. (This can be obtained via random binning of $2^{n(I(U;S)+\delta)}$ number of codewords $U^n(M,k)$, $k\in[1:2^{n(I(U;S)+\delta)}]$, into bins represented by $k_1$, where the codeword index per bin represented by $k_2$.) We continue as follows.
\begin{align*}
I(M;Z^n)
&=H(M)-H(M|Z^n)\\
&=H(M)-H(M,K_1,K_2,Z^n)+H(Z^n)+H(K_1,K_2|Z^n,M)\\
&=H(Z^n)+H(K_1|Z^n,M)+H(K_2|Z^n,M,K_1)-H(K_1,K_2|M)-H(Z^n|M,K_1,K_2)\\
&\stackrel{(a)}{\leq} n(H(Z|U)+I(U;S)+\epsilon_2) - H(U^n|M) - H(Z^n|M,U^n)\\
&= n(H(Z|U)+I(U;S)+\epsilon_2) - H(U^n,Z^n|M)\\
&= n(H(Z|U)+I(U;S)+\epsilon_2) - H(U^n,Z^n|M,S^n) - I(S^n;U^n,Z^n|M)\\
&\stackrel{(b)}{\leq} n(H(Z|U)+I(U;S)+\epsilon_2) - H(Z^n|M,S^n,U^n) - H(S^n|M)+H(S^n|U^n,Z^n)\\
&\stackrel{(c)}{\leq} n(I(S;Z,U)+\epsilon_2) - H(S^n|M)+H(S^n|U^n,Z^n),
\end{align*}
where (a) follows by having $H(Z^n)\leq nH(Z)$ as described above, $H(K_1|Z^n,M)\leq H(K_1)\leq n(I(U;S)-I(U;Z)+2\delta)$, $H(K_2|Z^n,M,K_1)\leq n\epsilon_1$ for some $\epsilon_1\to 0$ as $n\to\infty$ (this is decoding of $K_2$ at eavesdropper having $(Z^n,M,K_1)$, following from steps similar to the proof of Theorem~\ref{thm:Rin1} replacing Bob with Eve, as number of $K_2$ indices is $2^{n(I(U;Z)-\delta)}$, see also Appendix~\ref{App:CoveringIndexDecoding}), $H(K_1,K_2|M)=H(U^n|M)$, having $H(Z^n|M,K_1,K_2)=H(Z^n|M,K_1,K_2,U^n)$ as $U^n(M,K_1,K_2)$ is determined by $(M,K_1,K_2)$, and defining $\epsilon_2=2\delta+\epsilon_1$, (b) is by $H(U^n|M,S^n)\geq 0$, and (c) is same as that of the step (d) of the previous paragraph.

\end{IEEEproof}


\section{Proof of Corollary~\ref{thm:WiretapState}}
\label{App:WiretapState}

From Lemma~\ref{thm:Leakage}, we have
\begin{align*}
I(M;Z^n)\leq nI(S;Z,U)+H(S^n|U^n,Z^n)-H(S^n|M,T)+n\epsilon.
\end{align*}
Here, $H(S^n|U^n,Z^n)=\sum\limits_{i=1}^n H(S_i|S_1^{i-1},U^n,Z^n) \leq \sum\limits_{i=1}^n H(S_i|U_i,Z_i)=nH(S|U,Z)$. In addition, as $S^n$ is generated i.i.d., and also independent of $(M,T)$, we have $H(S^n|M,T)=H(S^n)\geq n(H(S)-\epsilon_1)$ with some $\epsilon_1\to 0$ as $n\to\infty$. (The latter expression follows as we can bound $H(S^n)\geq H(S^n|E)$ where $E$ is an indicator random variable with $E=1$ if $S^n$ is typical. Then, $H(S^n|E)= \Pr\{E=0\}H(S^n|E=0)+\Pr\{E=1\}H(S^n|E=1) \geq \Pr\{E=1\}H(S^n|E=1) \geq (1-\epsilon_0)n(H(S)-\epsilon_0)$ for some arbitrarily small $\epsilon_0$, from which the assertion follows by taking $\epsilon_1=\epsilon_0(1+H(S)-\epsilon_0)$.) Then, using these two observations in the equation above, we obtain $I(M;Z^n)\leq nI(S;Z,U)+nH(S|U,Z)-nH(S)+n\epsilon_1+n\epsilon=n(\epsilon_1+\epsilon)$, which concludes the proof.


\section{Proof of Proposition~\ref{thm:Rout1}}
\label{App:Rout1}

\begin{IEEEproof}
Define a random variable $Q$ uniform over $\{1,\cdots,n\}$ and independent of everything else. (A standard technique of using $Q$ as a time sharing parameter will be utilized in the following.) Also define $U_i=(S_1^{i-1},Z_{i+1}^n)$. We have the following bounds.
\begin{align}
I(S^n;Y^n) &\leq I(X^n,S^n;Y^n)\nonumber\\
&\overset{(a)}{\leq} \sum\limits_{i=1}^n I(X_i,S_i;Y_i)\nonumber\\
&\overset{(b)}{=} n I(X_Q,S_Q;Y_Q|Q)\nonumber\\
&\overset{(c)}{\leq} nI(U_Q,Q,X_Q,S_Q;Y_Q) \label{eq:Rout1eq1}
\end{align}
where (a) is due to the memoryless channel $p(y_i|x_i,s_i)$ and the fact that conditioning does not increase the entropy, (b) follows from the distribution of $Q$, (c) follows as the added term $I(Q;Y_Q)+I(U_Q;Y_Q|Q,X_Q,S_Q)\geq 0$. In addition,
\begin{align}
I(S^n;Y^n) &\leq H(S^n)\nonumber\\
&= \sum\limits_{i=1}^n H(S_i|S_{1}^{i-1})\nonumber\\
&\overset{(a)}{=} \sum\limits_{i=1}^n H(S_i)\nonumber\\
&= nH(S_Q|Q) \label{eq:Rout1eq2},
\end{align}
where (a) holds as $S^n$ has an i.i.d. distribution. Moreover,
\begin{align}
I(S^n;Z^n) &= H(S^n) - H(S^n|Z^n)\nonumber\\
&= \sum\limits_{i=1}^n H(S_i) - H(S_i|S_1^{i-1},Z^n)\nonumber\\
&\overset{(a)}{\geq} \sum\limits_{i=1}^n H(S_i) - H(S_i|Z_i,S_1^{i-1},Z_{i+1}^n)\nonumber\\
&\overset{(b)}{=} \sum\limits_{i=1}^n H(S_i) - H(S_i|Z_i,U_i)\nonumber\\
&\overset{(c)}{=} n I(S_Q;Z_Q,U_Q|Q) \label{eq:Rout1eq3},
\end{align}
where (a) is due the fact that conditioning does not increase the entropy, (b) follows from the definition of $U_i$, and (c) is the standard time sharing argument. Finally, we have
\begin{align}
0&\leq \sum\limits_{i=1}^n I(Z_{i+1}^n;Z_i)\nonumber\\
&= \sum\limits_{i=1}^n I(S_1^{i-1}, Z_{i+1}^n;Z_i) - I(S_1^{i-1};Z_i|Z_{i+1}^n)\nonumber\\
&\overset{(a)}{=} \sum\limits_{i=1}^n I(S_1^{i-1}, Z_{i+1}^n;Z_i) - I(Z_{i+1}^n;S_i|S_1^{i-1})\nonumber\\
&\overset{(b)}{=} \sum\limits_{i=1}^n I(S_1^{i-1}, Z_{i+1}^n;Z_i) - I(S_1^{i-1},Z_{i+1}^n;S_i)\nonumber\\
&\overset{(c)}{=} \sum\limits_{i=1}^n I(U_i;Z_i) - I(U_i;S_i)\nonumber\\
&= n (I(U_Q;Z_Q|Q) - I(U_Q;S_Q|Q)) \label{eq:Rout1eq4},
\end{align}
where (a) follows by Csisz\'ar's sum lemma~\cite{ElGamal:Network11} (note that this is similar to the converse result of the Gel'fand-Pinsker problem given in \cite{Heegard:Capacity81}, where the indices of $S$ and $Z$ are reversed), (b) is due to $I(S_1^{i-1};S_i)=0$ as $S^n$ has an i.i.d. distribution, (c) follows from definition of $U_i$, (d) is the standard time sharing argument.

Now, define $U=(U_Q,Q)$, $S=S_Q$ independent of $Q$ (note that this argument is also used in~\cite{Merhav:Information07} in the corresponding single-letterization arguments), $X=X_Q$, $Y=Y_Q$, and $Z=Z_Q$. Then, \eqref{eq:Rout1eq1} implies $I(S^n;Y^n)\leq nI(U_Q,Q,X_Q,S_Q;Y_Q) \leq nI(U,X,S;Y)=nI(S,X;Y)$ as $U\to (X,S)\to Y$ due to the memoryless channel $p(y|x,s)$. \eqref{eq:Rout1eq2} reduces to $I(S^n;Y^n)\leq nH(S_Q|Q)=nH(S)$, as $S_Q=S$ is independent of $Q$. \eqref{eq:Rout1eq3} implies $I(S^n;Z^n)\geq nI(S_Q;Z_Q,U_Q,Q)=nI(S;Z,U)$ as $I(S_Q;Z_Q,U_Q|Q)=H(S_Q|Q)-H(S_Q|Z_Q,U_Q,Q)=H(S_Q)-H(S_Q|Z_Q,U_Q,Q)=I(S_Q;Z_Q,U_Q,Q)$ due to the independence of $S_Q=S$ and $Q$. \eqref{eq:Rout1eq4} is $0\leq n(I(U_Q;Z_Q|Q) - I(U_Q;S_Q|Q))$. Consider adding $n(I(Q;Z_Q)-I(Q;S_Q))$ to the right hand side of this inequality. We have, as $S_Q=S$ and $Q$ are independent, $I(Q;S_Q)=0$, and hence $n(I(Q;Z_Q)-I(Q;S_Q))\geq 0$. Then, $0\leq n(I(U_Q;Z_Q|Q) - I(U_Q;S_Q|Q))\leq n(I(U_Q,Q;Z_Q)-I(U_Q,Q;S_Q))=n(I(U;Z)-I(U;S))$, which implies $0\leq I(U;Z)-I(U;S)$ as a necessary condition.

This concludes the proof, as combining the bounds above with the fact that any achievable $(R_a,R_l)$ for the given channel $p(y,z|x,s)$ and $p(s)$ should satisfy $\frac{1}{n} I(S^n;Y^n) \geq R_a - \epsilon$ and $\frac{1}{n} I(S^n;Z^n) \leq R_l + \epsilon$ implies the inequalities stated in the proposition.
\end{IEEEproof}


\section{Proof of Proposition~\ref{thm:Rout2}}
\label{App:Rout2}

\begin{IEEEproof}
Let $\Pc_1$ denote the set of $p(u,x|s)$ satisfying $I(U;Y)\geq I(U;S)$, and denote $\Pc_2$ denote the set of $p(u,x|s)$ satisfying $I(U;Z)\geq I(U;S)$. For the channel $p(y,z|x,s)=p(y|x,s)p(z|y)$, any $p\in \Pc_2$ also satisfies $p\in \Pc_1$. Therefore, using the Proposition~\ref{thm:Rout1}, we have that if $(R_a,R_l)$ is achievable, then $(R_a,R_l)\in\Rc_o^{3}$, where
\begin{eqnarray*}
\Rc_o^{3} = \bigcup\limits_{p(u,x|s)}
\left( R_a, R_l \right)
\end{eqnarray*}
satisfying
\begin{eqnarray*}
R_a &\leq& \min \left\{H(S), I(X,S;Y) \right\}\\
R_l &\geq& I(S;Z,U)\\
0 &\leq& I(U;Y)-I(U;S).
\end{eqnarray*}

It remains to show $R_a-R_l$ bound. We add the following bound to the ones stated above. (Note that, we use the same $U_i$ and $Q$-dependent definitions stated in Appendix~\ref{App:Rout1}, and hence the following bound can be added to the ones stated in Appendix~\ref{App:Rout1}.)
\begin{eqnarray*}
n(R_a-R_l) &=& I(S^n;Y^n) - I(S^n;Z^n) \\
&=& I(X^n,S^n;Y^n) - I(X^n,S^n;Z^n) -
\left(I(X^n;Y^n|S^n) - I(X^n;Z^n|S^n)\right)\\
&\overset{(a)}{\leq}& I(X^n,S^n;Y^n) - I(X^n,S^n;Z^n)\\
&\overset{(b)}{=}& I(X^n,S^n;Y^n|Z^n)\\
&=& \sum\limits_{i=1}^n H(Y_i|Y_1^{i-1},Z^n) - \sum\limits_{i=1}^n H(Y_i|Y_1^{i-1},Z^n,X^n,S^n)\\
&\overset{(c)}{\leq}& \sum\limits_{i=1}^n H(Y_i|Z_i) - \sum\limits_{i=1}^n H(Y_i|Z_i,X_i,S_i) \\
&=& \sum\limits_{i=1}^n I(X_i,S_i;Y_i|Z_{i})\\
&\overset{(d)}{=}& nI(X_Q,S_Q;Y_Q|Z_{Q},Q)\\
&\overset{(e)}{\leq}& nI(U_Q,Q,X_Q,S_Q;Y_Q|Z_{Q})\\
&\overset{(f)}{=}& nI(U,X,S;Y|Z)\\
&\overset{(g)}{=}& nI(X,S;Y|Z),
\end{eqnarray*}
where (a) and (b) are due to the degradedness condition, (c) is due to the fact that conditioning does not increase entropy and the Markov chain $(Y_1^{i-1},Z_1^{i-1},Z_{i+1}^n,S_1^{i-1},S_{i+1}^n,X_1^{i-1},X_{i+1}^n)\to (X_i,S_i,Z_i) \to Y_i$ as $(X_i,S_i)$ generates $Y_i$, (d) follows from the memoryless property of the channel, (d) and (f) are due to definitions given in Appendix~\ref{App:Rout1}, (e) is by adding the non-negative term $I(Q;Y_Q|Z_Q)+I(U_Q;Y_Q|Z_Q,Q,X_Q,S_Q)$, and (g) is due to the Markov chain $U\to (X,S)\to (Y,Z)$ as outputs $(y_i,z_i)$ are generated i.i.d. using $(x_i,s_i)$ over the memoryless channel $p(y,z|x,s)=p(y|x,s)p(z|y)$. The result then follows by taking the union over all joint distributions $p(u,x|s)$.
\end{IEEEproof}


\section{Proof of the converse for Theorem~\ref{thm:RevDegDMC}}
\label{App:RevDegDMC}

\begin{IEEEproof}
We bound the rate difference as follows.
\begin{eqnarray*}
n(R_a-R_l) &\leq& I(S^n;Y^n) - I(S^n;Z^n) \\
&=& \sum\limits_{i=1}^n I(S^n;Y_i|Y_1^{i-1})
- I(S^n;Z_i|Z_{i+1}^n) \\
&\overset{(a)}{=}& \sum\limits_{i=1}^n
I(S^n;Y_i|Y_1^{i-1},Z_{i+1}^n)
+ I(Z_{i+1}^n;Y_i|Y_1^{i-1})
- I(Z_{i+1}^n;Y_i|Y_1^{i-1},S^n)\nonumber\\
&&\:{-} \left[
I(S^n;Z_i|Y_1^{i-1},Z_{i+1}^n)
+ I(Y_{1}^{i-1};Z_i|Z_{i+1}^{n})
- I(Y_{1}^{i-1};Z_i|Z_{i+1}^{n},S^n)
\right] \\
&\overset{(b)}{=}& \sum\limits_{i=1}^n
I(S^n;Y_i|Y_1^{i-1},Z_{i+1}^n)
-I(S^n;Z_i|Y_1^{i-1},Z_{i+1}^n)\\
&=&  \sum\limits_{i=1}^n I(S_i;Y_i|Y_1^{i-1},Z_{i+1}^n,S_{1}^{i-1},S_{i+1}^n)
+I(S_{1}^{i-1},S_{i+1}^n;Y_i|Y_1^{i-1},Z_{i+1}^n) \nonumber\\
&&{}\:{-}I(S_i;Z_i|Y_1^{i-1},Z_{i+1}^n,S_{1}^{i-1},S_{i+1}^n)
-I(S_{1}^{i-1},S_{i+1}^n;Z_i|Y_1^{i-1},Z_{i+1}^n)\\
&\overset{(c)}{\leq}& \sum\limits_{i=1}^n
I(S_i;Y_i|U_i) - I(S_i;Z_i|U_i)\\
&\overset{(d)}{=}& n[I(S;Y|U)-I(S;Z|U)],
\end{eqnarray*}
where, in (a),
we used the equalities
\begin{equation*}
I(S^n;Y_i|Y_1^{i-1}) + I(Z_{i+1}^n;Y_i|Y_1^{i-1},S^n)
=I(Z_{i+1}^n;Y_i|Y_1^{i-1}) + I(S^n;Y_i|Y_1^{i-1},Z_{i+1}^n)
\end{equation*}
and
\begin{equation*}
I(S^n;Z_i|Z_{i+1}^n) + I(Y_{1}^{i-1};Z_i|Z_{i+1}^{n},S^n)
=I(Y_{1}^{i-1};Z_i|Z_{i+1}^{n}) + I(S^n;Z_i|Z_{i+1}^{n},Y_{1}^{i-1});
\end{equation*}
in (b), we used the Csiszar's sum lemma~\cite{Csisz'ar:Broadcast78}
(and a conditional form of it) to obtain the equalities
\begin{equation*}
\sum\limits_{i=1}^n I(Z_{i+1}^n;Y_i|Y_1^{i-1})
= \sum\limits_{i=1}^n I(Y_{1}^{i-1};Z_i|Z_{i+1}^{n})
\end{equation*}
and
\begin{equation*}
\sum\limits_{i=1}^n I(Z_{i+1}^n;Y_i|Y_1^{i-1},S^n)
= \sum\limits_{i=1}^n I(Y_{1}^{i-1};Z_i|Z_{i+1}^{n},S^n);
\end{equation*}
in (c), we define
$U_i\triangleq (Y_1^{i-1},Z_{i+1}^n,S_{1}^{i-1},S_{i+1}^n)$;
and use the reversely degradedness of the channel resulting in
$I(S_{1}^{i-1},S_{i+1}^n;Z_i|Y_1^{i-1},Z_{i+1}^n)\geq I(S_{1}^{i-1},S_{i+1}^n;Y_i|Y_1^{i-1},Z_{i+1}^n)$, in (d), we obtain the single-letter expression
(by defining $U=(U_i,i)$, etc., see, e.g., \cite{Cover:Elements91}).
Note that, with the definition of $U_i$ in (c),
$X_i$ is generated from $S^n$, and hence from $(U_i,S_i)$.
In addition, given $(X_i,S_i)$, $Z_i$ is independent of $(U_i,S_i)$.
Thus, $(U,S)\to (X,S)\to Z$ forms a Markov chain, and the
upper bound is given by
\begin{eqnarray*}
R_a-R_l &\leq& \max\limits_{p(u,x|s) \: \textrm{s.t.} \: (U,S)\to (X,S)\to Z}
I(S;Y|U) - I(S;Z|U)\\
&=& \max\limits_{p(x|u^*,s),\: u^*\in\Uc} I(S;Y|U=u^*)-I(S;Z|U=u^*)\\
&=& \max\limits_{p(x|s)} I(S;Y)-I(S;Z),
\end{eqnarray*}
where the equalities follow due to the following:
First, the conditional mutual information expression is
maximized with a particular input $u^*$ (as randomizing
over different $u$ values will not increase the sum
$\sum\limits_{u} (I(S;Y|U=u) - I(S;Z|U=u))\Prob\{U=u\})$)
and a probability distribution $p^*(x|u^*,s)$.
Second, the optimal $p^*(x|u^*,s)$ will correspond to a
$p(x|s)$. Thus, the converse result can be stated over
input distributions in the form $p(x|s)$, matching to the achievability result.
\end{IEEEproof}





\end{document}